\documentclass[11pt]{article}
\usepackage{amsmath,amsthm,amssymb,amsfonts}
\usepackage{mathtools}
\usepackage{mathrsfs}
\usepackage{multirow}
\usepackage[inline]{enumitem}
\usepackage{hyperref}
\usepackage{bm}
\usepackage[margin=1in]{geometry}
\usepackage{colortbl}
\usepackage{booktabs}
\usepackage[mathscr]{euscript}
\usepackage{microtype}

\usepackage[noblocks]{authblk}

\usepackage{sectsty}
\usepackage{setspace}
\onehalfspacing

\usepackage{caption}
\captionsetup{labelfont=bf, font=small}

\usepackage{todonotes}

\newtheorem{assumption}{Assumption}
\newtheorem{theorem}{Theorem}

\newtheorem{proposition}{Proposition}
\theoremstyle{remark}

\usepackage[round]{natbib}
 \bibpunct[, ]{(}{)}{,}{a}{}{,}%

\input{latexmacro}

\DeclareMathOperator{\IMSE}{IMSE}
\newcommand{\SFM}{\mathsf{M}}
\newcommand{\SFW}{\mathsf{W}}
\newcommand{\SFY}{\mathsf{Y}}
\newcommand{\SFZ}{\mathsf{Z}}
\newcommand{\CalL}{\mathcal{L}}
\newcommand{\CalR}{\mathcal{R}}

\begin{document}

\title{Stochastic Kriging for Inadequate Simulation Models}

\author{Lu Zou, Xiaowei Zhang}
\affil{Department of Industrial Engineering and Decision Analytics, The Hong Kong University of Science and Technology, Clear Water Bay, Hong Kong}

\date{}

\maketitle

\begin{abstract}
Stochastic kriging is a popular metamodeling technique for representing the unknown response surface of a simulation model. However, the simulation model may be inadequate in the sense that there may be a non-negligible discrepancy between it and the real system of interest. Failing to account for the model discrepancy may conceivably result in erroneous prediction of the real system's performance and mislead the decision-making process. This paper proposes a metamodel that extends stochastic kriging to incorporate the model discrepancy. Both the simulation outputs and the real data are used to characterize the model discrepancy. The proposed metamodel can provably enhance the prediction of the real system's performance. We derive general results for experiment design and analysis, and demonstrate the advantage of the proposed metamodel relative to competing methods. Finally, we study the effect of Common Random Numbers (CRN). The use of CRN is well known to be detrimental to the prediction accuracy of stochastic kriging in general. By contrast, we show that the effect of CRN in the new context is substantially more complex. The use of CRN can be either detrimental or beneficial depending on the interplay between the magnitude of the observation errors and other parameters involved. 

\end{abstract}

\textit{Key words}: stochastic kriging; model discrepancy; experiment design; common random numbers

\section{Introduction}

Simulation is used broadly in various areas including health care, finance, supply chain management, etc. to analyze the performance of complex stochastic systems. The popularity is attributed to the modeling flexibility that can account for virtually any level of details of the system and any performance measure of interest. However, simulation models are often computationally expensive to execute, which severely restricts the usefulness of simulation when timely decision making is necessary. Simulation metamodeling has been developed actively in the simulation community in order to alleviate this computational issue; see \cite{BartonMeckesheimer06} for an overview. The basic idea is that the user only executes the simulation model at a small number of carefully selected ``design points''. A metamodel is then built to approximate the response surface of the simulation model by interpolating the simulation outputs. The responses at other places are predicted by the metamodel without running the simulation at all, thereby reducing the computational cost substantially.

Kriging is a metamodeling technique that has been studied extensively in both the spatial statistics community \citep{stein1999interpolation} and the design and analysis of computer experiments community \citep{fang2006design}; see also \cite{kleijnen2009} for a review. Kriging imposes a spatial correlation structure on the unknown response surface, and thus can provide a good global fit over the design space of interest. Its analytical tractability and ease of use also contribute to its wide adoption. Stochastic kriging (SK) was introduced by \cite{ankenman2010stochastic} as an extension of kriging in the stochastic simulation setting to account for the uncertainty that results from the random simulation errors. The SK metamodel has drawn substantial attention from the simulation community in recent years. It has been successfully used to quantify the impact of input uncertainty on the simulation outputs \citep{BartonNelsonXie14,XieNelsonBarton14} as well as to guide the random search for the optimal design of a simulation model \citep{quan2013,sun2014}. Numerous efforts have also been devoted to understand its theoretical properties \citep{chen2012effects,chen2014stochastic} and to enhance its performance \citep{chen2013enhancing,QuFu14,ShenHongZhang17}.

Nevertheless, there may be a non-negligible discrepancy between the simulation model  and the real system of interest, in which case the model is said to be inadequate. This may occur in practice either because data collection is not sufficient to build an adequate model, or because certain detailed structure of the real system is overlooked. Model discrepancy is typically addressed as part of model validation and calibration in simulation literature; see, e.g., \citet[Chapter 10]{BanksCarsonNelsonNicol09} for an introduction. Specifically, model validation is concerned with comparing the simulation outputs with the observations of the real system via statistical tests or expert assessment \citep{Sargent13}. Model calibration, on the other hand, is the iterative process of comparing the model to the real system, collecting more data, and refining the model by adjusting the parameters and even the structure \citep{Xu2017}. The process of validation and calibration is time consuming due to the high computational cost of running a simulation model. Thus, it often stops when the time constraint for model development is met and does not necessarily end up with a high-fidelity model. The model discrepancy, regardless of its magnitude, will be normally neglected along with the observations of the real system  in the subsequent simulation analysis. 

Consequently, as a surrogate of the simulation model the SK metamodel would conceivably provide erroneous prediction about the real system and potentially mislead the system optimization, if the model discrepancy is significant but is discarded. The present paper attempts to address this issue. We consider a metamodel that extends SK to incorporate the model discrepancy in a coherent fashion and can combine both the simulation outputs and the observations of the real system to improve the prediction. We call the new metamodel stochastic kriging for inadequate simulation models (SK-i). We prove that by integrating both datasets, SK-i predicts the performance of the real system  with smaller mean squared error (MSE) than completing methods that reply on only one of the datasets. 

The SK-i metamodel represents the model discrepancy as a realization of a Gaussian random field, an idea proposed and popularized by \cite{kennedy2001bayesian} as a means for Bayesian calibration for deterministic simulation models. The approach is recently adopted by \cite{yuan2015calibration} and extended to the stochastic simulation setting. They propose a sequential procedure that aims to dynamically allocate the design points and meanwhile update the calibration parameter of the simulation model. There are several differences between the SK-i metamodel and the Bayesian calibration approach, however. First, the Bayesian calibration approach assumes the existence of an unknown parameter in the simulation model and the objective is to adjust its value to reduce the model discrepancy. By contrast, we follow the conventional setting of SK and assume that the calibration process has been completed but there may exist non-negligible model discrepancy. Second, the Bayesian calibration approach estimates the unknown parameters by computing their posterior distribution possibly via Markov chain Monte Carlo methods \citep[Part III]{GelmanCarlinSternDunsonVehtariRubin14}, whereas we follow the frequentist perspective and perform maximum likelihood estimation. The experiment designs developed in \cite{yuan2015calibration} and in this paper also reflect the different perspectives. Third, a critical assumption which the Bayesian calibration approach relies on is that the simulation errors are homoskedastic, i.e., the variance is constant at different design points. However, stochastic simulation models, especially those for queueing systems, usually have heteroskedastic simulation errors; see, e.g., \cite{cheng1999} and \cite{ShenHongZhang17}. Like SK, the methodology developed for SK-i is not restricted by such an assumption.

A distinctive characteristic of SK is that it involves two types of uncertainty -- one about the response surface of the simulation model and the other about the simulation errors -- and their interplay leads to various insights about the simulation experiment design. For instance, the optimal allocation of the simulation replications at a chosen design point is not simply proportional to the standard deviation of the simulation errors there, but is distorted by the spatial correlation structure imposed by SK; see \cite{ankenman2010stochastic}. In order to account for the model inadequacy, the SK-i metamodel involves two additional types of uncertainty -- one about the model discrepancy and the other about the errors in the observations of the real system. The four distinct types of uncertainty in SK-i and their interplay give rise to more sophisticated and even contrary results relative to SK. For example, the former appear in the experiment design, whereas the latter in the analysis of the effect of Common Random Numbers (CRN).

CRN is a variance reduction technique that is used widely due to its simplicity; see, e.g., \citet[Chapter V.6]{AsmussenGlynn07}. It is well known that in general, CRN increases the MSE of the SK predictor, albeit having beneficial effects in other aspects such as gradient estimation. This therefore precludes the use of CRN in conjunction with SK if the prediction matters the most; see \cite{ankenman2010stochastic}, \cite{chen2012effects}, and \cite{chen2013enhancing}. However, we show via both a stylized model and extensive numerical experiments that the effect of CRN is complex. It can be either beneficial or detrimental to the prediction. This is due to the presence of the two additional types of uncertainty in SK-i and their interplay with the two types of uncertainty in SK.

The contributions of the present paper are summarized as follows. First, we introduce the SK-i metamodel that extends the SK metamodel to the context where the inadequacy of the simulation model is non-negligible relative to the real system of interest. We provide a mathematical foundation for SK-i, addressing a series of problems that arise naturally in the new context with regard to, e.g., MSE-optimal prediction, parameter estimation, and experiment design. 

Second, we prove that SK-i yields more accurate prediction than competing methods, thanks to its capability of using both the simulation outputs and the observations of the real system jointly for prediction. This intuitive but important result has a  two-fold implication. On one hand, it indicates another usage of the real observations in addition to being used for calibrating the simulation model, i.e., they can and should be combined with the simulation outputs to improve prediction. On the other hand, it suggests that developing a simulation model, even a crude one, may help produce better predictions of the real system's performance than merely relying on the physical observations. 

Third, we analyze the effect of CRN on the prediction accuracy of the SK-i metamodel. We show that the situation is substantially more complex for SK-i than SK. In particular, the use of CRN may be beneficial to prediction under certain circumstances. The surprising result is essentially due to the presence of the additional uncertainty about the model discrepancy and the observation errors as well as their interplay with the uncertainty from the SK metamodel. 

The present paper extends our preliminary results in \cite{ZhangZou16} in numerous ways. It proves theoretically that SK-i yields better prediction than the competing methods which was demonstrated merely empirically in the previous work, a derivation of the experiment design that minimizes the integrated MSE, an in-depth analysis showing that the effect of CRN is much more complicated than the well-known prior result, and additional numerical experiments.

The remainder of the paper is organized as follows. In \S\ref{sec:model}, we introduce SK-i and derive the MSE-optimal predictor. In \S\ref{sec:SK-i}, we analyze SK-i in depth, proving that it has the smallest MSE among the three competing alternatives. We also analyze the effects of various parameters of SK-i  on its prediction accuracy. In \S\ref{sec:estimation}, we discuss parameter estimation and experiment design. In \S\ref{sec:experiments}, we illustrate the advantage of SK-i via numerical experiments. In \S\ref{sec:CRN}, we study the effect of CRN on the prediction accuracy via both a stylized model and numerical illustration. We conclude in \S\ref{sec:conclusion} and collect numerous technical results in the Appendices.

\section{Model Formulation}\label{sec:model}
We first review stochastic kriging, and then present the new metamodel to account for the  inadequacy of the simulation model relative to the real system.

\subsection{Stochastic Kriging}
Stochastic kriging (SK) was proposed in \cite{ankenman2010stochastic} as  a flexible, interpolation-based approach to modeling the relationship between the unknown response surface of a stochastic simulation model and the controllable design variables. Let $\BFx=(x_1,\ldots,x_d)^\intercal \in\Real^d$ denote the design variable and $\SFY(\BFx)$ denote the unknown response surface, i.e., the mean performance measure of the simulation model evaluated at $\BFx$. SK models the response surface as a realization of a Gaussian random field, namely,
\begin{equation}\label{eq:SK_GRF}
\SFY(\BFx) =  \BFf^\intercal(\BFx) \BFbeta+\SFM(\BFx),
\end{equation}
where $\BFf(\BFx)$ is vector of known functions, $\BFbeta$ is a vector of unknown parameters having the same dimensionality, and $\SFM$ is a zero-mean Gaussian random field. The  term $\BFf^\intercal(\BFx_i) \BFbeta$ represents the ``trend'' or mean of the response surface. Components of $\BFf(\BFx)$ can be domain-specific functions \citep{YangAnkenmanNelson07} or basis functions such as polynomials.

Given an experiment design $\{(\BFx_i, n_i):i=1,\ldots,k\}$, $n_i$ simulation replications are executed at each design point $\BFx_i\in\Real^d$. Let $y_j(\BFx_i)$ denote the simulation output from  replication $j$ at design point $\BFx_i$, for $i=1,\ldots,k$ and $j=1,\ldots,n_i$. Then, the output is expressed in SK as
\begin{equation}\label{eq:SK}
y_j(\BFx_i)=\SFY(\BFx_i)+\epsilon_j(\BFx_i)=\BFf^\intercal(\BFx_i) \BFbeta+\SFM(\BFx_i)+\epsilon_j(\BFx_i),
\end{equation}
where $\epsilon_j(\BFx_i)$ is the simulation error for replication $j$ taken at $\BFx_i$. 

A main purpose of SK is to predict the response at any arbitrary point $\BFx_0$ using the simulation outputs $\{y_j(\BFx_i):j=1,\ldots,n_i, i=1,\ldots,k\}$, instead of running additional simulation. We denote the sample mean of the simulation outputs and the simulation errors respectively by
\begin{equation}\label{eq:sample_avg}
\overline{y}(\BFx_i)\coloneqq\frac{1}{n_i}\sum_{j=1}^{n_i}y_j(\BFx_i)\quad\mbox{and}\quad \overline{\epsilon}(\BFx_i)\coloneqq\frac{1}{n_i}\sum_{j=1}^{n_i}\epsilon_j(\BFx_i),
\end{equation}
$i=1,\ldots,k$. Let $\Sigma_{\SFM}(\cdot, \cdot)$ denote the covariance function of $\SFM$, i.e., $\Sigma_{\SFM}(\BFx, \BFx') = \Cov(\SFM(\BFx), \SFM(\BFx'))$. Let $\BFSigma_{\SFM}$ denote the covariance matrix of $(\SFM(\BFx_1),\ldots,\SFM(\BFx_k))$ and $\BFSigma_\epsilon$ denote the covariance matrix of $(\overline{\epsilon}(\BFx_1),\ldots,\overline{\epsilon}(\BFx_k))$. In addition, let $\BFSigma_{\SFM}(\BFx_0,\cdot)$ denote the $k\times 1$ vector  whose $i^{\mathrm{th}}$ component is $\Cov(\SFM(\BFx_0), \SFM(\BFx_i))$, the spatial covariance between  the prediction point $\BFx_0$ and design point $\BFx_i$, $i=1,\ldots,k$. Assuming that $\BFSigma_{\SFM}$, $\BFSigma_\epsilon$, and $\BFbeta$ are known, the best linear unbiased predictor (BLUP) of $\SFY(\BFx_0)$ that minimizes the mean squared error (MSE) of the prediction is
\begin{equation}\label{eq:SK_BLUP}
\widehat\SFY(\BFx_0) =\BFf^\intercal(\BFx_0)\BFbeta+\BFSigma^\intercal_{\SFM}(\BFx_0,\cdot)  [\BFSigma_{\SFM}+\BFSigma_\epsilon]^{-1}( \overline{\BFy}-\BFF\BFbeta),
\end{equation}
where $ \overline{\BFy}\coloneqq (\overline{y}(\BFx_1),\ldots,\overline{y}(\BFx_k))^\intercal$ and $\BFF \coloneqq (\BFf(\BFx_1),\ldots,\BFf(\BFx_k))^\intercal $. The optimal MSE is
\begin{equation}\label{eq:SK_MSE}
\MSE^*(\widehat\SFY(\BFx_0)) = \Sigma_{\SFM}(\BFx_{0},\BFx_{0}) -  \BFSigma_{\SFM}^\intercal(\BFx_{0},\cdot)[\BFSigma_{\SFM}+\BFSigma_\epsilon]^{-1} \BFSigma_{\SFM}(\BFx_{0},\cdot).
\end{equation}

In practice, the covariance function $\Sigma_{\SFM}$ must be specified in advance. For example, a usual assumption is
$\Sigma_{\SFM}(\BFx,\BFx') = \tau_{\SFM}^2 \CalR_{\SFM}(\BFx, \BFx';\BFtheta_{\SFM})$, where $\tau_{\SFM}^2$ represents the spatial variance of $\SFM$ and $\CalR_{\SFM}$ is a correlation function with unknown parameter $\BFtheta_{\SFM}$ to be estimated. It is also usually assumed that $\SFM$ is second-order stationary, which means that $\CalR_{\SFM}$ depends on $(\BFx,\BFx')$ only through $\BFx-\BFx'$. A typical example is the squared exponential correlation function of the form $\CalR(\BFx, \BFx';\BFtheta) = \exp\left(\sum_{i=1}^d \theta_i(x_i - x'_i)^2\right)$, where $\BFtheta = (\theta_1,\ldots,\theta_d)^\intercal$; see \cite{XieNelsonStaum10} for a comparison of various correlation functions and their influence on SK.

\subsection{Stochastic Kriging for Inadequate Simulation Models}

The simulation model used to approximate the real system of interest may be \textit{inadequate}, meaning that the discrepancy between the simulation model and the real system is non-negligible. This occurs possibly because the data used for model construction is not sufficient, or because the real system is highly complex and the certain structural details are not incorporated in the simulation model. It is conceivable that using the SK metamodel as a surrogate of the simulation model but neglecting the issue of model discrepancy may lead to mis-informed, suboptimal decisions for the real system. In this section, we propose a new metamodel that captures simultaneously both the response surface of the simulation model and its model discrepancy.

Let $\SFZ(\BFx)$ denote the mean performance measure of the real system evaluated at $\BFx$. Suppose that
\begin{equation}\label{real_formulation}
\SFZ(\BFx)=\rho\SFY(\BFx)+\delta(\BFx),
\end{equation}
where $\rho$ is an unknown parameter and $\delta(\cdot)$ is referred to as the \textit{model discrepancy function}. This formulation is adopted from  \cite{kennedy2001bayesian}, which analyzes the calibration of a \textit{deterministic} simulation model against the real system. To build a metamodel that is compatible with SK, we represent the unknown model discrepancy function as
\begin{equation}\label{error_part}
\delta(\BFx)=\BFg^\intercal(\BFx)\BFgamma+\SFW(\BFx),
\end{equation}
where $\BFg(\BFx)$ is a vector of known functions, $\BFgamma$ is a vector of unknown parameters having the same dimensionality, and $\SFW$ is a zero-mean Gaussian random field with covariance function $\Sigma_{\SFW}$. In the light of \eqref{eq:SK_GRF} and \eqref{error_part}, the formulation \eqref{real_formulation} can be rewritten as
\begin{equation}\label{eq:metamodel}
\SFZ(\BFx)=\rho[\BFf^\intercal(\BFx)\BFbeta+\SFM(\BFx)]+\BFg^\intercal(\BFx)\BFgamma+\SFW(\BFx),
\end{equation}
which will henceforth be referred to as stochastic kriging for inadequate simulation models (SK-i).

Clearly, aside from the simulation outputs, observations of the real system are needed in order to to quantify the model discrepancy. Let $\{\BFt_i:i=1,\ldots,\ell\}$ denote the locations where the real system is observed. These locations are not necessarily the same as the design points $\{\BFx_i:i=1,\ldots,k\}$ in general. Nevertheless, we assume for simplicity that $\{\BFt_i:i=1,\ldots,\ell\}\subseteq \{\BFx_i:i=1,\ldots,k\}$. This is a reasonable assumption, since it is usually more expensive to collect real data than to run simulation experiments, and during experiment design of the simulation model we can choose to set the design points to include $\{\BFt_i:i=1,\ldots,\ell\}$. The theory developed in the sequel can be generalized easily to cover the setting where the two sets of locations are arbitrarily different. Further, we assume that $\BFt_i = \BFx_i$ for each $i=1,\ldots,\ell$, since we can change the indexes properly otherwise.

For each $i=1,\ldots, \ell$, let $z_i$ denote the observation of the real system at $\BFx_i$ and $\zeta(\BFx_i)$ denote the corresponding observation error with mean zero, so that
\[z_i=\SFZ(\BFx_i)+\zeta(\BFx_i).\]
For any $\BFx_0$, we want to predict the response $\SFZ(\BFx_0)$ based on both the simulation outputs $\overline{\BFy}=(\overline{y}(\BFx_1),\ldots,\overline{y}(\BFx_k))^\intercal$ and the real data $\BFz\coloneqq(z_1,\ldots,z_\ell)^\intercal$.  Using the \emph{augmented} data set $(\overline{\BFy}, \BFz)$ for prediction represents a key difference between SK-i and SK, since the latter does not account for the model discrepancy and uses only $\overline{\BFy}$ for prediction. The following assumptions are standard in SK literature.

\begin{assumption}\label{asp:normal_errors}
The simulation errors $\{\epsilon_j(\BFx_i):j=1,2,\ldots\}$ are independent normal random variables with mean 0 and variance $\sigma_\epsilon^2(\BFx_i)$, $i=1,\ldots,k$, and $\epsilon_j(\BFx_i)$ is independent of $\epsilon_{j'}(\BFx_{i'})$ if $i\neq i'$. 
\end{assumption}

\begin{assumption}\label{asp:obs_errors}
The observation errors $\{\zeta(\BFx_i):i=1,\ldots,\ell\}$ are independent normal random variables with mean 0 and variance $\sigma_\zeta^2$.
\end{assumption}

\begin{assumption}\label{asp:mutual_ind}
The Gaussian random fields, $\SFM$ and $\SFW$, the simulation errors $\{\epsilon_j(\BFx_i):j=1,\ldots,n_i, i=1,\ldots,k\}$, and the observation errors $\{\zeta(\BFx_i):i=1,\ldots,\ell\}$ are mutually independent.
\end{assumption}

It is easy to see that the augmented data has multivariate normal distribution under Assumptions \ref{asp:normal_errors} and \ref{asp:mutual_ind}; see Proposition \ref{prop:joint_normal} below.  
The following notations are also needed to facilitate the presentation. Let $\BFM(k)\coloneqq (\SFM(\BFx_1),\ldots,\SFM(\BFx_k))^\intercal$, $\BFSigma_{\BFM(k)}$ denote the covariance matrix of $\BFM(k)$,  $\BFSigma_{\BFM(k),\BFM(\ell)}$ denote the covariance matrix between $\BFM(k)$ and $\BFM(\ell)$, and $\BFSigma_{\BFM(k)}(\BFx_0,\cdot)$ denote the $k\times 1$ vector whose $i^{\mathrm{th}}$ component is $\Cov(\SFM(\BFx_0), \SFM(\BFx_i))$, $i=1,\ldots,k$. Moreover, let $\BFSigma_{\BFW}$ denote the covariance matrix of $\BFW\coloneqq (\SFW(\BFx_1),\ldots,\SFW(\BFx_\ell))$, $\BFSigma_\zeta$ denote the covariance matrix of $(\zeta(\BFx_1),\ldots,\zeta(\BFx_\ell))$, and $\BFSigma_{\BFW}(\BFx_0,\cdot)$  denote the $\ell\times 1$ vector whose $i^{\mathrm{th}}$ component is $\Cov(\SFW(\BFx_0), \SFW(\BFx_i))$, $i=1,\ldots,\ell$. Finally, let $\BFF(k)\coloneqq (\BFf(\BFx_1),\ldots,\BFf(\BFx_k))^\intercal$ and $\BFG\coloneqq (\BFg(\BFx_1),\ldots,\BFg(\BFx_\ell))^\intercal$.

\begin{proposition}\label{prop:joint_normal}
Under Assumptions \ref{asp:normal_errors} -- \ref{asp:mutual_ind},
\begin{equation}\label{eq:multi_normal}
\begin{pmatrix}
\overline{\BFy} \\[0.5ex]
 \BFz
\end{pmatrix}
\sim
\mathcal N\left(
\begin{pmatrix}
\BFF(k)\BFbeta \\[0.5ex]
\rho\BFF(\ell)\BFbeta + \BFG\BFgamma
\end{pmatrix},
\BFV
\right),
\end{equation}
where $\BFV$ is a block matrix as follows
\begin{equation}\label{eq:Vmatrix}
\BFV =
\begin{pmatrix}
\BFV_{\BFone\BFone} & \BFV_{\BFone\BFtwo} \\
\BFV_{\BFone\BFtwo}^\intercal & \BFV_{\BFtwo\BFtwo}
\end{pmatrix}
\coloneqq
\begin{pmatrix}
\BFSigma_{\BFM(k)} + \BFSigma_\epsilon &\rho \BFSigma_{\BFM(k),\BFM(\ell)} \\[0.5ex]
\rho \BFSigma_{\BFM(k),\BFM(\ell)}^\intercal & \rho^2\BFSigma_{\BFM(\ell)} + \BFSigma_{\BFW} +\BFSigma_\zeta
\end{pmatrix}.
\end{equation}
\end{proposition}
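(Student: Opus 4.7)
The plan is to verify the three components of the multivariate-normal claim separately: joint Gaussianity of the stacked vector $(\overline{\BFy}^\intercal,\BFz^\intercal)^\intercal$, its mean, and the block structure of its covariance. Joint Gaussianity comes for free from linearity and Assumption \ref{asp:mutual_ind}: each coordinate of $\overline{\BFy}$ is, by \eqref{eq:SK} and \eqref{eq:sample_avg}, an affine function of $\SFM(\BFx_i)$ and the $\epsilon_j(\BFx_i)$'s, while each coordinate of $\BFz$ is, by \eqref{eq:metamodel}, an affine function of $\SFM(\BFx_i)$, $\SFW(\BFx_i)$, and $\zeta(\BFx_i)$. Assumptions \ref{asp:normal_errors}--\ref{asp:mutual_ind} together make the concatenation of these primitive random vectors jointly Gaussian, so any linear image of it --- in particular $(\overline{\BFy}^\intercal,\BFz^\intercal)^\intercal$ --- is jointly Gaussian.

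For the mean, I would take expectations coordinate-wise. Because $\SFM$ and $\SFW$ are zero-mean Gaussian random fields, and because Assumptions \ref{asp:normal_errors} and \ref{asp:obs_errors} give $\E[\overline{\epsilon}(\BFx_i)]=0$ and $\E[\zeta(\BFx_i)]=0$, the linear structure of \eqref{eq:SK} and \eqref{eq:metamodel} yields $\E[\overline{y}(\BFx_i)] = \BFf^\intercal(\BFx_i)\BFbeta$ and $\E[z_i] = \rho\BFf^\intercal(\BFx_i)\BFbeta + \BFg^\intercal(\BFx_i)\BFgamma$. Stacking these and using the definitions of $\BFF(k)$, $\BFF(\ell)$, and $\BFG$ reproduces exactly the mean vector in \eqref{eq:multi_normal}.

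The only substantive bookkeeping is the covariance. I would compute each of the three distinct blocks in turn, using Assumption \ref{asp:mutual_ind} to discard every cross-term between different sources of randomness. For $\BFV_{\BFone\BFone}$, the independence of $\SFM$ and the simulation errors gives $\Cov(\overline{y}(\BFx_i),\overline{y}(\BFx_j)) = \Cov(\SFM(\BFx_i),\SFM(\BFx_j)) + \Cov(\overline{\epsilon}(\BFx_i),\overline{\epsilon}(\BFx_j))$, which assembles into $\BFSigma_{\BFM(k)}+\BFSigma_\epsilon$. For $\BFV_{\BFtwo\BFtwo}$, writing the centered observation as $z_i - \E[z_i] = \rho\SFM(\BFx_i)+\SFW(\BFx_i)+\zeta(\BFx_i)$ and invoking mutual independence leaves $\Cov(z_i,z_j) = \rho^2\Cov(\SFM(\BFx_i),\SFM(\BFx_j))+\Cov(\SFW(\BFx_i),\SFW(\BFx_j))+\Cov(\zeta(\BFx_i),\zeta(\BFx_j))$, matching $\rho^2\BFSigma_{\BFM(\ell)}+\BFSigma_{\BFW}+\BFSigma_\zeta$. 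For the cross block $\BFV_{\BFone\BFtwo}$, all pairings involving $\overline{\epsilon}$, $\SFW$, or $\zeta$ vanish under Assumption \ref{asp:mutual_ind}, and only $\Cov(\SFM(\BFx_i),\rho\SFM(\BFx_j)) = \rho\Cov(\SFM(\BFx_i),\SFM(\BFx_j))$ survives, giving $\rho\BFSigma_{\BFM(k),\BFM(\ell)}$.

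No step is conceptually hard; the whole proposition is direct bookkeeping based on linearity, Gaussianity, and the mutual-independence assumption. The only care needed is indexing: $\overline{\BFy}$ ranges over all $k$ design points while $\BFz$ lives on only the first $\ell$ of them, so $\BFSigma_{\BFM(k),\BFM(\ell)}$ must be read as the rectangular $k\times\ell$ cross-covariance matrix. The convention $\BFt_i=\BFx_i$ for $i=1,\ldots,\ell$ is precisely what makes this off-diagonal block align cleanly with the $\ell$-dimensional observation vector and keeps the block assembly unambiguous.
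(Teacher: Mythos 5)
Your proof is correct and is exactly the standard argument the paper has in mind (it omits the details, deferring to \cite{ZhangZou16}): joint Gaussianity from linearity in the independent Gaussian primitives, coordinate-wise means, and block-by-block covariance computation in which Assumption \ref{asp:mutual_ind} kills all cross-terms. Your attention to the rectangular $k\times\ell$ cross block and the convention $\BFt_i=\BFx_i$ is the right bookkeeping point; nothing is missing.
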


\begin{theorem}\label{theo:BLUP}
Under Assumptions \ref{asp:normal_errors} -- \ref{asp:mutual_ind},  the BLUP of $\SFZ(\BFx_0)$  that minimizes the MSE is
\begin{equation}\label{eq:BLUP}
\widehat \SFZ(\BFx_0) = \rho\,\BFf^\intercal(\BFx_0) \BFbeta + \BFg^\intercal(\BFx_0)\BFgamma + \BFC^\intercal \BFV^{-1}\left[
\begin{pmatrix}
\overline{\BFy} \\[0.5ex]
\BFz
\end{pmatrix}
 -
\begin{pmatrix}
\BFF(k) & \BFzero \\[0.5ex]
\rho\, \BFF(\ell) & \BFG
\end{pmatrix}
\begin{pmatrix}
\BFbeta \\\BFgamma
\end{pmatrix}
\right],
\end{equation}
where $\BFC$ is a  block vector as follows
\begin{equation*}
\label{eq:Cmatrix}
\BFC =
\begin{pmatrix}
\BFC_{\BFone} \\
\BFC_{\BFtwo}
\end{pmatrix}
\coloneqq
\begin{pmatrix}
\rho \BFSigma_{\BFM(k)}(\BFx_0,\cdot) \\[0.5ex]
\rho^2 \BFSigma_{\BFM(\ell)}(\BFx_0,\cdot)+ \BFSigma_{\BFW}(\BFx_0,\cdot)
\end{pmatrix}.
\end{equation*}
The optimal MSE is
\begin{equation}
\label{eq:optimal_MSE}
\MSE^*\left(\widehat \SFZ(\BFx_0)\right)= \rho^2 \Sigma_{\SFM}(\BFx_0,\BFx_0)  + \Sigma_{\SFW}(\BFx_0, \BFx_0) - \BFC^\intercal \BFV^{-1}\BFC.
\end{equation}
\end{theorem}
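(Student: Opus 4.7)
The plan is to leverage the joint normality established in Proposition \ref{prop:joint_normal} and exploit the well-known fact that in a Gaussian setting the minimum MSE predictor coincides with the conditional expectation and is a linear function of the observations; hence it is also the BLUP. Following the BLUP convention already used in \eqref{eq:SK_BLUP}, I treat $\rho$, $\BFbeta$, $\BFgamma$, and all covariance quantities as known.

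First I would augment the vector in \eqref{eq:multi_normal} with the scalar target $\SFZ(\BFx_0)$ and write out its joint normal distribution. The mean of $\SFZ(\BFx_0)$ is $\rho\,\BFf^\intercal(\BFx_0)\BFbeta+\BFg^\intercal(\BFx_0)\BFgamma$ by \eqref{eq:metamodel} and the assumption that $\SFM$ and $\SFW$ are zero-mean. The variance is $\Var(\SFZ(\BFx_0))=\rho^2\Sigma_{\SFM}(\BFx_0,\BFx_0)+\Sigma_{\SFW}(\BFx_0,\BFx_0)$, using Assumption \ref{asp:mutual_ind} to kill the cross term between $\SFM$ and $\SFW$. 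Next I would compute the cross-covariance vector between $\SFZ(\BFx_0)$ and $(\overline{\BFy},\BFz)^\intercal$ entry by entry:
\begin{align*}
\Cov(\SFZ(\BFx_0),\overline{y}(\BFx_i)) &= \Cov\bigl(\rho\SFM(\BFx_0)+\SFW(\BFx_0),\,\SFM(\BFx_i)+\overline{\epsilon}(\BFx_i)\bigr) = \rho\,\Sigma_{\SFM}(\BFx_0,\BFx_i),\\
\Cov(\SFZ(\BFx_0),z_i) &= \Cov\bigl(\rho\SFM(\BFx_0)+\SFW(\BFx_0),\,\rho\SFM(\BFx_i)+\SFW(\BFx_i)+\zeta(\BFx_i)\bigr)\\
&= \rho^2\Sigma_{\SFM}(\BFx_0,\BFx_i)+\Sigma_{\SFW}(\BFx_0,\BFx_i),
\end{align*}
where the independence clauses in Assumptions \ref{asp:normal_errors}--\ref{asp:mutual_ind} eliminate every other cross term. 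Stacking these gives precisely the vector $\BFC$ in the theorem statement.

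With the joint normal distribution in hand, I would invoke the standard conditional mean and variance formulas for partitioned Gaussian vectors: $\E[\SFZ(\BFx_0)\mid \overline{\BFy},\BFz]=\mu_{\SFZ}+\BFC^\intercal\BFV^{-1}(\BFY-\mu_{\BFY})$ and $\Var(\SFZ(\BFx_0)\mid \overline{\BFy},\BFz)=\Var(\SFZ(\BFx_0))-\BFC^\intercal\BFV^{-1}\BFC$, where $\BFY=(\overline{\BFy}^\intercal,\BFz^\intercal)^\intercal$ and $\mu_{\BFY}$ is its mean from \eqref{eq:multi_normal}. Substituting the expressions from the previous step immediately yields \eqref{eq:BLUP} and \eqref{eq:optimal_MSE}. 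To conclude that this conditional mean is the BLUP, I would note that (i) it is affine in $(\overline{\BFy},\BFz)$, (ii) its expectation equals $\E[\SFZ(\BFx_0)]$ so it is unbiased, and (iii) in the Gaussian case the conditional mean attains the unconstrained minimum MSE among all measurable predictors, hence a fortiori among all linear unbiased ones.

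I do not anticipate a serious obstacle, as the argument is essentially bookkeeping once one recognizes the Gaussian conditioning structure. The one place that merits care is ensuring that each cross-covariance is computed under the correct independence clauses, particularly distinguishing the covariance of $\SFM(\BFx_0)$ with $\overline{\epsilon}(\BFx_i)$ (zero, by Assumption \ref{asp:mutual_ind}) from the spatial covariance $\Sigma_{\SFM}(\BFx_0,\BFx_i)$, and similarly for $\SFW$ versus $\zeta$. If desired, an alternative derivation that does not invoke Gaussianity could be given by posing the Lagrangian for the constrained optimization $\min_{\BFlambda}\E[(\SFZ(\BFx_0)-\BFlambda^\intercal\BFY-c)^2]$ subject to unbiasedness and solving the normal equations; this yields the same formula and requires only finite second moments, but it is a longer route.
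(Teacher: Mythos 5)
Your proposal is correct, and the covariance bookkeeping at its core is exactly what the theorem requires: the entries $\Cov(\SFZ(\BFx_0),\overline{y}(\BFx_i))=\rho\,\Sigma_{\SFM}(\BFx_0,\BFx_i)$ and $\Cov(\SFZ(\BFx_0),z_i)=\rho^2\Sigma_{\SFM}(\BFx_0,\BFx_i)+\Sigma_{\SFW}(\BFx_0,\BFx_i)$ reproduce $\BFC$, and the prior variance $\rho^2\Sigma_{\SFM}(\BFx_0,\BFx_0)+\Sigma_{\SFW}(\BFx_0,\BFx_0)$ is the leading term of \eqref{eq:optimal_MSE}. Where you differ from the paper is in how the optimal coefficients $\BFC^\intercal\BFV^{-1}$ are justified. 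The paper omits the proof of this theorem (deferring to an earlier reference), but the argument it spells out for the closely analogous Proposition \ref{prop:GPR} in Appendix \ref{app:B} is precisely the route you relegate to your closing sentence: posit a linear predictor $a+\BFb^\intercal\overline{\BFy}+\BFc^\intercal\BFz$, expand the MSE as a quadratic in $(a,\BFb,\BFc)$, and solve the first-order conditions. Your primary route instead augments the Gaussian vector of Proposition \ref{prop:joint_normal} with the target $\SFZ(\BFx_0)$ and reads off the conditional mean and variance from the partitioned-Gaussian formulas. The trade-off is as you anticipate: the conditioning argument is shorter and delivers the stronger conclusion that $\widehat\SFZ(\BFx_0)=\E[\SFZ(\BFx_0)\mid\overline{\BFy},\BFz]$ is optimal among all measurable predictors (a fact the paper states separately in its discussion of \eqref{eq:BLUP}), but it leans on the normality in Assumptions \ref{asp:normal_errors}--\ref{asp:mutual_ind}; the quadratic-minimization route establishes optimality only within the linear class but requires nothing beyond finite second moments, so that part of the conclusion would survive if the normality assumptions were relaxed. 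Either derivation is acceptable, and both yield \eqref{eq:BLUP} and \eqref{eq:optimal_MSE} verbatim.
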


The proofs of Proposition \ref{prop:joint_normal} and Theorem \ref{theo:BLUP} can be found in \cite{ZhangZou16} with a slight modification, so we omit the details. 

The expression \eqref{eq:BLUP} can be interpreted as follows. It is easy to show that $\widehat \SFZ(\BFx_0) = \E[\SFZ(\BFx_0)|\overline{\BFy}, \BFz]$, namely, it is the \textit{conditional} expectation of the response given the augmented data. The term $\rho\BFf^\intercal(\BFx_0) \BFbeta + \BFg^\intercal(\BFx_0)\BFgamma$ in \eqref{eq:BLUP} is the \emph{unconditional} expectation of the response, i.e.,  $\E[\SFZ(\BFx_0)]$, which can be seen easily from \eqref{eq:metamodel}. The last summand in \eqref{eq:BLUP}, on the other hand, represents the information from the correlation between the response  and the augmented data. More specifically, $\BFC$ is the covariance vector between $\SFZ(\BFx_0)$ and $(\overline{\BFy}^\intercal, \BFz^\intercal)$, whereas $\BFV$ the covariance matrix of $(\overline{\BFy}^\intercal, \BFz^\intercal)$. Examining the expression of $\BFC$ more closely, we find that $\SFZ(\BFx_0)$ correlates with $\overline{\BFy}$ only through the random field $\SFM$, whereas $\SFZ(\BFx_0)$ correlates with $\overline{\BFy}$ through both $\SFM$ and $\SFW$. This is a consequence of the mutual independence between various random elements in Assumption \ref{asp:mutual_ind}. A similar statement can also be made about the covariance structure of the augmented data. For example, $\Cov(\overline{\BFy}, \BFz) = \rho \BFSigma_{\BFM(k), \BFM(\ell)}$ suggests that $\overline{\BFy}$ and $\BFz$ are correlated because they both involve $\SFM$ in their formulations.

Theorem \ref{theo:BLUP} generalizes a similar result in \cite{kennedy2001bayesian}. In particular, if the simulation model has no simulation errors, i.e., $\BFSigma_\epsilon = \BFzero$, then the BLUP and its MSE in Theorem  \ref{theo:BLUP} are reduced to those in \cite{kennedy2001bayesian} for deterministic simulation models. Theorem \ref{theo:BLUP} also generalizes the counterpart for SK. In particular, by setting $\rho=1$ and removing the observations $\BFz$, we can reduce \eqref{eq:BLUP} and \eqref{eq:optimal_MSE} to \eqref{eq:SK_BLUP} and \eqref{eq:SK_MSE}, respectively.

\section{Analysis of the SK-i Metamodel}\label{sec:SK-i}
In this section, we compare SK-i with two other methods for predicting the response of the real system and demonstrate the advantage of leveraging both the simulation outputs and the observations of the real system jointly for prediction. We also conduct sensitivity analysis and investigate how the MSE of prediction responds to the changes in various aspects of the SK-i metamodel, including the variability in the simulation errors, the variability in the observation errors, the sample size of the simulation outputs, and the sample size of the observations of the real system.

\subsection{Comparison with Other Prediction Methods}
In this section, we compare SK-i with two competing methods for predicting the response of the real system. One method is to apply the metamodel \eqref{eq:metamodel} to the observations of the real system and predict the response of the real system. The predictor is given by Proposition \ref{prop:GPR} below. We refer to this approach as Gaussian process regression (GPR). The proof of Proposition \ref{prop:GPR} is similar to that of Theorem \ref{theo:BLUP}, thereby  deferred to Appendix \ref{app:B}.

\begin{proposition}\label{prop:GPR}
Under Assumptions \ref{asp:normal_errors} and \ref{asp:mutual_ind}, the BLUP of $\SFZ(\BFx_0)$ given $\BFz$ is
\begin{equation}\label{eq:GPR_BLUP}
\widehat \SFZ_{\mathrm{GPR}}(\BFx_0) \coloneqq \rho\,\BFf^\intercal(\BFx_0) \BFbeta + \BFg^\intercal(\BFx_0)\BFgamma  + \BFC_{\BFtwo}^\intercal \BFV_{\BFtwo\BFtwo}^{-1} [\BFz - \rho \BFF(\ell)\BFbeta - \BFG\BFgamma],
\end{equation}
where $\BFC_{\BFtwo} = \rho^2 \BFSigma_{\BFM(\ell)}(\BFx_0,\cdot) + \BFSigma_{\BFW}(\BFx_0, \cdot)$ and $\BFV_{\BFtwo, \BFtwo} = \rho^2\BFSigma_{\BFM(\ell)} + \BFSigma_{\BFW} + \BFSigma_\zeta$. The optimal MSE is
\begin{equation}
\label{eq:GPR_MSE}
\MSE^*\left(\widehat \SFZ_{\mathrm{GPR}}(\BFx_0)\right)
= \rho^2 \Sigma_{\SFM}(\BFx_0,\BFx_0)  + \Sigma_{\SFW}(\BFx_0, \BFx_0)
- \BFC_{\BFtwo}^\intercal \BFV_{\BFtwo\BFtwo}^{-1}\BFC_{\BFtwo}.
\end{equation}
\end{proposition}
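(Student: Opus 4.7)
The plan is to adapt the derivation of Theorem \ref{theo:BLUP}, specialized to the setting where prediction is based on the real observations $\BFz$ alone rather than on the augmented data $(\overline{\BFy}, \BFz)$. The overall strategy is the same: exploit Assumptions \ref{asp:normal_errors} and \ref{asp:mutual_ind} to extract the relevant mean and covariance structure of $(\SFZ(\BFx_0), \BFz)$, and then apply the standard formula for the MSE-optimal linear unbiased predictor (equivalently, the conditional mean, since the joint distribution is Gaussian).

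First I would expand $\SFZ(\BFx_0) = \rho[\BFf^\intercal(\BFx_0)\BFbeta + \SFM(\BFx_0)] + \BFg^\intercal(\BFx_0)\BFgamma + \SFW(\BFx_0)$ and $z_i = \rho[\BFf^\intercal(\BFx_i)\BFbeta + \SFM(\BFx_i)] + \BFg^\intercal(\BFx_i)\BFgamma + \SFW(\BFx_i) + \zeta(\BFx_i)$ from \eqref{eq:metamodel}, which exhibits the joint vector $(\SFZ(\BFx_0), \BFz)$ as an affine function of the Gaussian random fields $\SFM$, $\SFW$ and the observation errors $\{\zeta(\BFx_i)\}$. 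Taking expectations gives $\E[\SFZ(\BFx_0)] = \rho\BFf^\intercal(\BFx_0)\BFbeta + \BFg^\intercal(\BFx_0)\BFgamma$ and $\E[\BFz] = \rho\BFF(\ell)\BFbeta + \BFG\BFgamma$.

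Next, invoking the mutual independence stipulated in Assumption \ref{asp:mutual_ind}, the cross-covariances across $\SFM$, $\SFW$, and $\zeta$ all vanish, so the covariances decouple cleanly: $\Cov(\BFz) = \rho^2 \BFSigma_{\BFM(\ell)} + \BFSigma_{\BFW} + \BFSigma_\zeta$, which is exactly $\BFV_{\BFtwo\BFtwo}$; $\Cov(\SFZ(\BFx_0), \BFz) = \rho^2 \BFSigma_{\BFM(\ell)}(\BFx_0, \cdot) + \BFSigma_{\BFW}(\BFx_0, \cdot) = \BFC_{\BFtwo}$; and $\Var(\SFZ(\BFx_0)) = \rho^2 \Sigma_{\SFM}(\BFx_0, \BFx_0) + \Sigma_{\SFW}(\BFx_0, \BFx_0)$. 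Substituting these ingredients into the general formula $\E[\SFZ(\BFx_0)] + \Cov(\SFZ(\BFx_0), \BFz)\,[\Cov(\BFz)]^{-1}(\BFz - \E[\BFz])$ for the BLUP reproduces \eqref{eq:GPR_BLUP}, and substituting into $\Var(\SFZ(\BFx_0)) - \Cov(\SFZ(\BFx_0), \BFz)\,[\Cov(\BFz)]^{-1}\Cov(\BFz, \SFZ(\BFx_0))$ reproduces \eqref{eq:GPR_MSE}.

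I do not anticipate a genuine obstacle here: the argument is essentially a verbatim specialization of the proof of Theorem \ref{theo:BLUP} after deleting the $\overline{\BFy}$ block throughout, which is why the authors defer to \cite{ZhangZou16}. The only bookkeeping step worth highlighting is the vanishing of cross-covariances across the independent noise sources, which is immediate from Assumption \ref{asp:mutual_ind} and yields the clean additive decomposition of both $\BFV_{\BFtwo\BFtwo}$ and $\BFC_{\BFtwo}$.
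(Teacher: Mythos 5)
Your proposal is correct and follows essentially the same route as the paper's Appendix~\ref{app:B}: identify the mean vector, the covariance $\BFV_{\BFtwo\BFtwo}$ of $\BFz$, and the cross-covariance $\BFC_{\BFtwo}$ (all of which decouple by Assumption~\ref{asp:mutual_ind}), then obtain the optimal linear predictor. The only cosmetic difference is that you invoke the standard Gaussian conditioning/BLUP formula directly, whereas the paper re-derives it by writing the MSE of $a+\BFc^\intercal\BFz$ as a quadratic in $(a,\BFc)$ and setting first-order derivatives to zero; the two are equivalent.
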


A second competing method is to neglect any inadequacy of the simulation model and use SK with the simulation outputs to predict the response of the simulation model as if it were the true response of the real system. The predictor is $\widehat\SFY(\BFx_0)$ given by \eqref{eq:SK_BLUP}. We refer to this method as SK.

We stress here that the three methods use different data for prediction: SK uses only $\overline{\BFy}$, GPR uses only $\BFz$, whereas SK-i uses both. Hence, it is conceivable that SK-i ought to have the most accurate prediction since it uses more data than the other two methods. We show below that this is indeed the case. Specifically, provided that the parameters of these metamodels are known, SK-i has the smallest MSE among the three methods.

\begin{theorem}\label{theo:MSE_comparison}
Let  $\MSE^*_{\mathrm{GPR}}$, $\MSE^*_{\mathrm{SK}}$, and $\MSE^*_{\mathrm{SK-i}}$  denote the MSE for predicting $\SFZ(\BFx_0)$ using $\widehat\SFZ_{\mathrm{GPR}}(\BFx_0)$, $\widehat \SFY(\BFx_0)$, and $\widehat \SFZ(\BFx_0)$, respectively. Then, under Assumptions \ref{asp:normal_errors} -- \ref{asp:mutual_ind},
\begin{enumerate}[label=(\roman*)]
\item
$\MSE^*_{\mathrm{SK-i}} \leq \MSE^*_{\mathrm{GPR}}$, and the equality holds if and only if $\BFC_{\BFone} - \BFV_{\BFone\BFtwo}  \BFV_{\BFtwo\BFtwo}^{-1} \BFC_{\BFtwo} = \BFzero$;
\item
$\MSE^*_{\mathrm{SK-i}} \leq \MSE^*_{\mathrm{SK}}$, and the equality holds if and only if
\[
(\rho-1)\BFf^\intercal(\BFx_0) \BFbeta + \BFg^\intercal(\BFx_0) \BFgamma =0, \quad
(\rho-1) \BFSigma_{\BFM(k)}^\intercal(\BFx_0,\cdot)=\BFzero, \quad\mbox{and}\quad
\BFC_{\BFtwo}-\BFV_{\BFone\BFtwo}^\intercal \BFV_{\BFone\BFone}^{-1} \BFC_{\BFone} =\BFzero.
\]
\end{enumerate}
\end{theorem}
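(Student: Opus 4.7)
The plan is to recast each gap $\MSE^*_{\mathrm{GPR}} - \MSE^*_{\mathrm{SK-i}}$ and $\MSE^*_{\mathrm{SK}} - \MSE^*_{\mathrm{SK-i}}$ as a sum of non-negative quadratic forms via Schur-complement expansions of $\BFC^\intercal \BFV^{-1}\BFC$; both signs and the stated equality conditions then read off directly. Throughout I use that $\BFV$ is positive definite under Assumptions \ref{asp:normal_errors}--\ref{asp:mutual_ind}, so both Schur complements $\BFS \coloneqq \BFV_{\BFone\BFone} - \BFV_{\BFone\BFtwo}\BFV_{\BFtwo\BFtwo}^{-1}\BFV_{\BFone\BFtwo}^\intercal$ and $\BFT \coloneqq \BFV_{\BFtwo\BFtwo} - \BFV_{\BFone\BFtwo}^\intercal\BFV_{\BFone\BFone}^{-1}\BFV_{\BFone\BFtwo}$ are positive definite.

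For part (i), the block-inversion formula for $\BFV^{-1}$ with pivot $\BFV_{\BFtwo\BFtwo}$ yields the identity
\begin{equation*}
\BFC^\intercal\BFV^{-1}\BFC = \BFC_{\BFtwo}^\intercal\BFV_{\BFtwo\BFtwo}^{-1}\BFC_{\BFtwo} + (\BFC_{\BFone} - \BFV_{\BFone\BFtwo}\BFV_{\BFtwo\BFtwo}^{-1}\BFC_{\BFtwo})^\intercal \BFS^{-1} (\BFC_{\BFone} - \BFV_{\BFone\BFtwo}\BFV_{\BFtwo\BFtwo}^{-1}\BFC_{\BFtwo}).
\end{equation*}
Substituting into \eqref{eq:optimal_MSE} and \eqref{eq:GPR_MSE}, the difference $\MSE^*_{\mathrm{GPR}} - \MSE^*_{\mathrm{SK-i}}$ reduces to precisely the $\BFS^{-1}$-quadratic residual, which is non-negative and vanishes iff $\BFC_{\BFone} - \BFV_{\BFone\BFtwo}\BFV_{\BFtwo\BFtwo}^{-1}\BFC_{\BFtwo} = \BFzero$.

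Part (ii) is more delicate because $\widehat\SFY(\BFx_0)$ is biased for $\SFZ(\BFx_0)$, so I cannot simply invoke the BLUP property of $\widehat\SFZ(\BFx_0)$; instead I would compute $\MSE^*_{\mathrm{SK}}$ from scratch via a bias--variance decomposition. Writing $\BFa = \BFV_{\BFone\BFone}^{-1}\BFSigma_{\BFM(k)}(\BFx_0,\cdot)$ so that $\widehat\SFY(\BFx_0) = \BFf^\intercal(\BFx_0)\BFbeta + \BFa^\intercal(\BFM(k) + \overline{\BFepsilon})$, the independence structure in Assumption \ref{asp:mutual_ind} gives
\begin{equation*}
\MSE^*_{\mathrm{SK}} = \bigl[(\rho-1)\BFf^\intercal(\BFx_0)\BFbeta + \BFg^\intercal(\BFx_0)\BFgamma\bigr]^2 + \rho^2\Sigma_{\SFM}(\BFx_0,\BFx_0) + \Sigma_{\SFW}(\BFx_0,\BFx_0) + \tfrac{1-2\rho}{\rho^2}\,\BFC_{\BFone}^\intercal\BFV_{\BFone\BFone}^{-1}\BFC_{\BFone},
\end{equation*}
after collecting terms and using $\BFC_{\BFone} = \rho\BFSigma_{\BFM(k)}(\BFx_0,\cdot)$. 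Next I would expand $\BFC^\intercal\BFV^{-1}\BFC$ via the \emph{other} Schur complement (pivot $\BFV_{\BFone\BFone}$), producing $\BFC_{\BFone}^\intercal\BFV_{\BFone\BFone}^{-1}\BFC_{\BFone}$ plus a $\BFT^{-1}$-quadratic form in $\BFC_{\BFtwo} - \BFV_{\BFone\BFtwo}^\intercal\BFV_{\BFone\BFone}^{-1}\BFC_{\BFone}$. The crucial simplification is the arithmetic identity $\tfrac{1-2\rho}{\rho^2} + 1 = \tfrac{(\rho-1)^2}{\rho^2}$, after which
\begin{equation*}
\MSE^*_{\mathrm{SK}} - \MSE^*_{\mathrm{SK-i}} = \bigl[(\rho-1)\BFf^\intercal(\BFx_0)\BFbeta + \BFg^\intercal(\BFx_0)\BFgamma\bigr]^2 + (\rho-1)^2\,\BFSigma_{\BFM(k)}^\intercal(\BFx_0,\cdot)\BFV_{\BFone\BFone}^{-1}\BFSigma_{\BFM(k)}(\BFx_0,\cdot) + (\BFC_{\BFtwo} - \BFV_{\BFone\BFtwo}^\intercal\BFV_{\BFone\BFone}^{-1}\BFC_{\BFone})^\intercal \BFT^{-1}(\BFC_{\BFtwo} - \BFV_{\BFone\BFtwo}^\intercal\BFV_{\BFone\BFone}^{-1}\BFC_{\BFone}),
\end{equation*}
a sum of three non-negative terms that vanish independently under precisely the three stated conditions (the first is the bias; the second is zero iff either $\rho=1$ or $\BFSigma_{\BFM(k)}(\BFx_0,\cdot) = \BFzero$, by positive definiteness of $\BFV_{\BFone\BFone}^{-1}$; the third is zero iff $\BFC_{\BFtwo} = \BFV_{\BFone\BFtwo}^\intercal\BFV_{\BFone\BFone}^{-1}\BFC_{\BFone}$, by positive definiteness of $\BFT^{-1}$).

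The main obstacle is part (ii): one must carefully track the independence-induced cancellations in the variance computation and then recognize the identity $(1-2\rho)/\rho^2 + 1 = (\rho-1)^2/\rho^2$ that fuses the SK-style variance contribution with the leading piece of the Schur expansion into a manifestly non-negative $(\rho-1)^2$ factor. Without this collapse, the sign of $\MSE^*_{\mathrm{SK}} - \MSE^*_{\mathrm{SK-i}}$ is not transparent; part (i), by contrast, is essentially a one-shot Schur-complement expansion.
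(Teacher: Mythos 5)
Your proposal is correct and follows essentially the same route as the paper: both parts rest on the two Schur-complement expansions of $\BFC^\intercal\BFV^{-1}\BFC$ (pivoting on $\BFV_{\BFtwo\BFtwo}$ for (i) and on $\BFV_{\BFone\BFone}$ for (ii)), and part (ii) uses the same bias--variance computation together with the identity $(1-2\rho)/\rho^2+1=(\rho-1)^2/\rho^2$. The only quibble is your remark that the optimality of $\widehat\SFZ(\BFx_0)$ cannot be invoked for (ii) because $\widehat\SFY(\BFx_0)$ is biased: the inequality itself still follows immediately, since $\widehat\SFZ(\BFx_0)$ minimizes MSE over all affine functions of $(\overline{\BFy},\BFz)$ whether biased or not (this is exactly how the paper obtains both inequalities), and the explicit decomposition is needed only for the equality characterization.
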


\begin{proof}

It is straightforward to prove $\MSE^*_{\mathrm{SK-i}} \leq \MSE^*_{\mathrm{GPR}}$ and $\MSE^*_{\mathrm{SK-i}} \leq \MSE^*_{\mathrm{SK}}$ by noticing that the three predictors are all linear predictors of the form $a+\BFb^\intercal \overline{\BFy} + \BFc^\intercal \BFz$ for some constant $a$ and vectors $\BFb$ and $\BFc$. The value of $(a, \BFb,\BFc)$ for the SK-i approach is the one that minimizes the MSE of such linear predictors.

The conditions for the equalities, however, rely on explicit calculation.  By \eqref{eq:optimal_MSE} and \eqref{eq:GPR_MSE},
\[
\MSE^*_{\mathrm{GPR}} - \MSE^*_{\mathrm{SK-i}} =  \BFC^\intercal \BFV^{-1} \BFC - \BFC_{\BFtwo}^\intercal \BFV_{\BFtwo\BFtwo}^{-1}\BFC_{\BFtwo}.
\]
Let $\BFS \coloneqq \BFV_{\BFone\BFone} - \BFV_{\BFone\BFtwo}\BFV_{\BFtwo\BFtwo}^{-1}\BFV_{\BFone\BFtwo}^\intercal$ be the Schur complement of $\BFV_{\BFtwo\BFtwo}$. Then,
\[\BFV^{-1} =
\begin{pmatrix}
\BFS^{-1} & - \BFS^{-1} \BFV_{\BFone\BFtwo}\BFV_{\BFtwo\BFtwo}^{-1} \\
- \BFV_{\BFtwo\BFtwo}^{-1} \BFV_{\BFone\BFtwo}^\intercal \BFS^{-1} &   \BFV_{\BFtwo\BFtwo}^{-1} + \BFV_{\BFtwo\BFtwo}^{-1} \BFV_{\BFone\BFtwo}^\intercal  \BFS^{-1} \BFV_{\BFone\BFtwo} \BFV_{\BFtwo\BFtwo}^{-1}
\end{pmatrix}
;
\]
see, e.g., \citet[\S0.8.5]{HornJohnson12}. Hence, by straightforward calculation,
\[\MSE^*_{\mathrm{GPR}} - \MSE^*_{\mathrm{SK-i}} =
[\BFC_{\BFone} - \BFV_{\BFone\BFtwo}  \BFV_{\BFtwo\BFtwo}^{-1} \BFC_{\BFtwo}]^\intercal
\BFS^{-1}
[\BFC_{\BFone} - \BFV_{\BFone\BFtwo}  \BFV_{\BFtwo\BFtwo}^{-1} \BFC_{\BFtwo}].\]
Since $\BFV$ is a non-singular covariance matrix, $\BFV$ is positive definite. Then, $\BFS$ is positive definite by Theorem 7.7.7 of \cite{HornJohnson12}, and thus $\BFS^{-1}$ is positive definite. Hence, $\MSE^*_{\mathrm{SK-i}} = \MSE^*_{\mathrm{GPR}}$ if and only if $\BFC_{\BFone} - \BFV_{\BFone\BFtwo}  \BFV_{\BFtwo\BFtwo}^{-1} \BFC_{\BFtwo} = \BFzero$.

The proof of part (ii) is similar. By \eqref{eq:SK_BLUP} and \eqref{eq:metamodel},
\[\SFZ(\BFx_0) - \widehat\SFY(\BFx_0) = (\rho-1)\BFf^\intercal(\BFx_0) \BFbeta + \BFg^\intercal(\BFx_0) \BFgamma +   \rho\SFM(\BFx_0) + \SFW(\BFx_0)  - \BFSigma^\intercal_{\BFM(k)}(\BFx_0,\cdot)  [\BFSigma_{\BFM(k)}+\BFSigma_\epsilon]^{-1}( \overline{\BFy}-\BFF\BFbeta). \]
Hence,
\begin{equation}\label{eq:bias}
\E[ \SFZ(\BFx_0) - \widehat\SFY(\BFx_0) ] = (\rho-1)\BFf^\intercal(\BFx_0) \BFbeta + \BFg^\intercal(\BFx_0) \BFgamma,
\end{equation}
by Proposition \ref{prop:joint_normal}, and
\begin{align*}
\Var[\SFZ(\BFx_0) - \widehat\SFY(\BFx_0)] =&  \Var[\rho\SFM(\BFx_0) + \SFW(\BFx_0) -  \BFSigma^\intercal_{\BFM(k)}(\BFx_0,\cdot) [\BFSigma_{\BFM(k)}+\BFSigma_\epsilon]^{-1} \overline{\BFy}] \nonumber \\
=& \rho^2\Sigma_{\SFM}(\BFx_0,\BFx_0) + \Sigma_{\SFW}(\BFx_0,\BFx_0) + (1-2\rho)  \BFSigma^\intercal_{\BFM(k)}(\BFx_0,\cdot) [\BFSigma_{\BFM(k)}+\BFSigma_\epsilon]^{-1}\BFSigma_{\BFM(k)}(\BFx_0,\cdot) \nonumber \\
=& \rho^2\Sigma_{\SFM}(\BFx_0,\BFx_0) + \Sigma_{\SFW}(\BFx_0,\BFx_0) + (1-2\rho) \rho^{-2} \BFC_{\BFone}^\intercal \BFV_{\BFone\BFone}^{-1}\BFC_{\BFone},\label{eq:var_SK}
\end{align*}
by direct calculation. Since
\[\MSE^*_{\mathrm{SK}} =  \E[(\SFZ(\BFx_0) - \widehat\SFY(\BFx_0))^2] = [\E[ \SFZ(\BFx_0) - \widehat\SFY(\BFx_0) ]]^2 + \Var[\SFZ(\BFx_0) - \widehat\SFY(\BFx_0)], \]
it follows from \eqref{eq:optimal_MSE} that
\begin{equation}\label{eq:MSE_diff}
\MSE^*_{\mathrm{SK}} - \MSE^*_{\mathrm{SK-i}} = [(\rho-1)\BFf^\intercal(\BFx_0) \BFbeta + \BFg^\intercal(\BFx_0) \BFgamma]^2
+ (1-2\rho) \rho^{-2} \BFC_{\BFone}^\intercal \BFV_{\BFone\BFone}^{-1}\BFC_{\BFone}
+ \BFC^\intercal \BFV^{-1} \BFC.
\end{equation}
Let $\BFT\coloneqq \BFV_{\BFtwo\BFtwo}-\BFV_{\BFone\BFtwo}^\intercal \BFV_{\BFone\BFone}^{-1}\BFV_{\BFone\BFtwo}$ be the Schur complement of $\BFV_{\BFone\BFone}$. Then,
\[\BFV^{-1} =
\begin{pmatrix}
\BFV_{\BFone\BFone}^{-1} + \BFV_{\BFone\BFone}^{-1} \BFV_{\BFone\BFtwo} \BFT^{-1} \BFV_{\BFone\BFtwo}^\intercal \BFV_{\BFone\BFone}^{-1} & -\BFV_{\BFone\BFone}^{-1} \BFV_{\BFone\BFtwo} \BFT^{-1} \\
-\BFT^{-1}\BFV_{\BFone\BFtwo}^\intercal \BFV_{\BFone\BFone}^{-1} & \BFT^{-1}
\end{pmatrix},
\]
and it is easy to show that
\begin{align*}
\MSE^*_{\mathrm{SK}} - \MSE^*_{\mathrm{SK-i}} =&  [(\rho-1)\BFf^\intercal(\BFx_0) \BFbeta + \BFg^\intercal(\BFx_0) \BFgamma]^2 \\
& +   (\rho-1)^2 \rho^{-2} \BFC_{\BFone}^\intercal \BFV_{\BFone\BFone}^{-1}\BFC_{\BFone}
+ [\BFC_{\BFtwo} - \BFV_{\BFone\BFtwo}^\intercal \BFV_{\BFone\BFone}^{-1} \BFC_{\BFone}]^\intercal
\BFT^{-1}
[\BFC_{\BFtwo} - \BFV_{\BFone\BFtwo}^\intercal \BFV_{\BFone\BFone}^{-1} \BFC_{\BFone}].
\end{align*}
Since $\BFV_{\BFone\BFone}^{-1}$ and $\BFT^{-1}$ are both positive definite, $\MSE^*_{\mathrm{SK}} - \MSE^*_{\mathrm{SK-i}}=0$ if and only if all the three summands are zeros, which is equivalent to $(\rho-1)\BFf^\intercal(\BFx_0) \BFbeta + \BFg^\intercal(\BFx_0) \BFgamma=0$, $\BFC_{\BFtwo} - \BFV_{\BFone\BFtwo}^\intercal \BFV_{\BFone\BFone}^{-1} \BFC_{\BFone} = \BFzero$,  and
$(\rho-1)\rho^{-1} \BFC_{\BFone} =\BFzero$. Noticing that $ \BFC_{\BFone} = \rho \BFSigma_{\BFM(k)}(\BFx_0,\cdot) $ completes the proof. 
\end{proof}

We show in Proposition \ref{prop:GPR} that $\SFZ_{\mathrm{GPR}}(\BFx_0)$ is an unbiased predictor  of $\SFZ(\BFx_0)$. Thus,  the difference in MSE between the SK-i method and the GPR method reflects their difference in prediction variance. In particular, compared to using only the observations of the real system for prediction, adding simulation outputs introduces no prediction bias and reduces prediction variance.

By contrast, $\widehat \SFY(\BFx_0)$ is biased for predicting $\SFZ(\BFx_0)$ in general  by \eqref{eq:bias}. The bias stems from  the model discrepancy being discarded by the SK method. Equation \eqref{eq:MSE_diff} further reveals that compared to the SK method, the SK-i method can both eliminate the prediction bias and reduce the prediction variance by taking advantage of observations of the real system.  (Even one observation suffices!)

\subsection{Effects of Simulation and Observation Errors}

We investigate the effects of the variability of the simulation/observation errors on the MSE of the SK-i metamodel. Notice that $\sigma_\epsilon^2(\BFx_i)$, $i=1,\ldots,k$, and $\sigma_\zeta^2$ appear only in the diagonal elements of the matrix $\BFV$ in \eqref{eq:optimal_MSE}, the expression of $\MSE^*(\widehat\SFZ(\BFx_0))$. Let $V_{ii}$ denote the $i^{\mathrm{th}}$ diagonal of $\BFV$, $i=1,\ldots,k+\ell$. Then, standard results of matrix calculus imply that
\[\frac{\partial \MSE^*(\widehat\SFZ(\BFx_0))}{\partial V_{ii}} = \frac{\partial (\BFC^\intercal\BFV^{-1}\BFC)}{\partial V_{ii}} = \BFC^\intercal\left[\BFV^{-1} \frac{\partial \BFV}{\partial V_{ii}}\BFV^{-1}\right]\BFC = (\BFV^{-1}\BFC)_i^2 \geq 0,
\]
where $(\BFV^{-1}\BFC)_i$ is the $i^{\mathrm{th}}$ element of $\BFV^{-1}\BFC$, since $\frac{\partial \BFV}{\partial V_{ii}}$ is a matrix of all zeros except the $i^{\mathrm{th}}$ diagonal element being 1. Hence, $\MSE^*(\widehat\SFZ(\BFx_0))$ is non-decreasing in $V_{ii}$. Notice that
\[
V_{ii} = \left\{
\begin{array}{ll}
\Sigma_{\SFM}(\BFx_i,\BFx_i) + \sigma_\epsilon^2(\BFx_i) / n_i, & \quad i=1,\ldots,k, \\[0.5ex]
\rho^2 \Sigma_{\SFM}(\BFx_{i-k},\BFx_{i-k}) + \Sigma_{\SFW}(\BFx_{i-k},\BFx_{i-k}) + \sigma_\zeta^2, & \quad i = k+1,\ldots,k+\ell.
\end{array}
\right.
\]
 It then follows that $\MSE^*(\widehat\SFZ(\BFx_0))$ is non-decreasing in both $\sigma_\epsilon^2(\BFx_i)$ and $\sigma_\zeta^2$. Indeed, this is an intuitive result. The less variability there is in the simulation errors or in the observation errors, the more accurate it is to predict the response of the real system. It is also easy to see that the MSE can be reduced by increasing the number of simulation replications.

\subsection{Effects of Data Sizes}

Intuitively, the prediction ought to be more accurate by increasing the number of design points (and run simulation models on them), or by increasing the number of observations of the real system. We now establish this result for SK-i formally.

Suppose that the number of design points is increased to $\tilde k>k$ while keeping everything else the same. Let $\overline{\BFy}_{1:k}$ and $\overline{\BFy}_{k+1:\tilde k}$ denote the original simulation outputs and the simulation outputs at the new design points, respectively. Given the data $(\overline{\BFy}_{1:k}, \overline{\BFy}_{k+1:\tilde k}, \BFz)$, we can write the SK-i predictor in the following linear form
\[
a +  \BFb^\intercal \overline{\BFy}_{1:k} +  \tilde\BFb^\intercal \overline{\BFy}_{k+1:\tilde k} +  \BFc^\intercal \BFz,
\]
for some constant $ a$ and some vectors $ \BFb$, $\tilde \BFb$, and $ \BFc$. On other hand, the SK-i predictor given the data  $(\overline{\BFy}_{1:k}, \BFz)$ can also be written in the above linear form with $\tilde\BFb=\BFzero$.  Since the value of $( a,  \BFb, \tilde\BFb,  \BFc)$ for the SK-i predictor given $(\overline{\BFy}_{1:k}, \overline{\BFy}_{k+1:\tilde k}, \BFz)$ is the one that minimizes the MSE among all such linear predictors. Hence, the MSE is non-increasing in $k$, the number of design points. In the same vein, we can show that  the MSE is non-increasing in $\ell$,  the number of observations of the real system.

\section{Parameter Estimation}\label{sec:estimation}

When deriving the BLUP in equation (\ref{eq:BLUP}), we have implicitly assume that the parameters including $(\BFbeta,\BFgamma)$ and those for defining the covariance matrices are given. However, they are generally unknown in practice. We now discuss the parameter estimation for the SK-i metamodel.

\subsection{Maximum Likelihood Estimation}\label{sec:MLE}

We are interested in the maximum likelihood estimation (MLE) and impose the following assumption to make the MLE more tractable.

\begin{assumption}\label{asp:stationary}
The Gaussian random fields $\SFM$ and $\SFW$ are both second-order stationary, namely,
\[\Sigma_{\SFM}(\BFx, \BFx') = \tau_{\SFM}^2\, \CalR_{\SFM}(\BFx-\BFx';\BFtheta_{\SFM})\quad\mbox{and}\quad \Sigma_{\SFW}(\BFx, \BFx') = \tau_{\SFW}^2\, \CalR_{\SFW}(\BFx-\BFx';\BFtheta_{\SFW}),\]
where $\tau_{\SFM}^2$  (resp., $\tau_{\SFW}^2$) is the spatial variance of $\SFM$  (resp., $\SFW$), and $\CalR_{\SFM}$ (resp., $\CalR_{\SFW}$) is the correlation depending only on $\BFx-\BFx'$ and may be a function of some unknown parameters $\BFtheta_{\SFM}$  (resp., $\BFtheta_W$). Moreover,  $\CalR_{\SFM}(\BFzero;\BFtheta_{\SFM}) = \CalR_{\SFW}(\BFzero;\BFtheta_{\SFW}) = 1$ and
\[\lim_{\|\BFx-\BFx'\|\to\infty} \CalR_{\SFM}(\BFx-\BFx';\BFtheta_{\SFM}) = \lim_{\|\BFx-\BFx'\|\to\infty} \CalR_{\SFW}(\BFx-\BFx';\BFtheta_{\SFW}) = 0, \]
where $\|\cdot \|$ denotes the Euclidean norm.
\end{assumption}

Let $\BFR_{\BFM(\ell)}(\BFtheta_{\SFM})$  denote the correlation matrix of $\BFM(\ell)$, $\BFR_{\BFW}(\BFtheta_{\SFW})$ denote the correlation matrix of $\BFW$, $\BFR_{\BFM(k),\BFM(\ell)}$ denote the correlation matrix between $\BFM(k)$ and $\BFM(\ell)$, and $\BFI_\ell$ denote the $\ell\times \ell$ identity matrix.  Then, the covariance matrix $\BFV$ can be expressed as
\[\BFV =
\begin{pmatrix}
\tau^2_{\SFM}\BFR_{\BFM(k)}(\BFtheta_{\SFM}) + \BFSigma_\epsilon & \rho\tau_{\SFM}^2\BFR_{\BFM(k),\BFM(\ell)}(\BFtheta_{\SFM}) \\[0.5ex]
\rho\tau_{\SFM}^2\BFR_{\BFM(k),\BFM(\ell)}^\intercal (\BFtheta_{\SFM}) & \rho^2\tau_{\SFM}^2\BFR_{\BFM(\ell)}(\BFtheta_{\SFM}) + \tau_{\SFW}^2\BFR_{\BFW}(\BFtheta_{\SFW})+\sigma^2_\zeta\BFI_\ell
\end{pmatrix}
\]

Notice that $\BFSigma_\epsilon = \mathrm{Diag}(\sigma_\epsilon^2(\BFx_1)/n_1,\ldots, \sigma_\epsilon^2(\BFx_k)/n_k)$ by Assumption \ref{asp:normal_errors}. For each $ i=1,\ldots,k$,  to estimate $\sigma_\epsilon^2(\BFx_i)$ we use the sample variance of the simulation replications at $\BFx_i$, i.e.,
\begin{equation}\label{eq:sample_var}
\widehat {\sigma}^2_\epsilon(\BFx_i) = \frac{1}{n_i-1}\sum_{j=1}^{n_i}(y_j(\BFx_i)-\overline{y}(\BFx_i))^2.
\end{equation}
It turns out that estimating $\BFSigma_\epsilon$ in this way and plugging the estimate $\widehat{\BFSigma}_\epsilon$ in the BLUP \eqref{eq:SK_BLUP} introduces no prediction bias, which generalizes a similar result for SK in \cite{ankenman2010stochastic}.  

\begin{theorem}\label{theo:unbias}
Let  $\widehat\BFSigma_\epsilon = \mathrm{Diag}(\widehat\sigma_\epsilon^2(\BFx_1)/n_1,\ldots, \widehat\sigma_\epsilon^2(\BFx_k)/n_k)$ and
\[\widehat{\widehat{\SFZ}} (\BFx_0)
= \rho\,\BFf^\intercal(\BFx_0) \BFbeta + \BFg^\intercal(\BFx_0)\BFgamma + \BFC^\intercal \widehat\BFV^{-1}\left[
\begin{pmatrix}
\overline{\BFy} \\[0.5ex]
\BFz
\end{pmatrix}
 -
\begin{pmatrix}
\BFF(k) & \BFzero \\[0.5ex]
\rho\, \BFF(\ell) & \BFG
\end{pmatrix}
\begin{pmatrix}
\BFbeta \\\BFgamma
\end{pmatrix}
\right],
 \]
 where $\widehat\BFV$ is the matrix obtained by replacing $\BFSigma_\epsilon$ by $\widehat\BFSigma_\epsilon $ in \eqref{eq:Vmatrix}. Then, under Assumptions \ref{asp:normal_errors} -- \ref{asp:mutual_ind},
 \[\E[\widehat{\widehat{\SFZ}} (\BFx_0)-\SFZ(\BFx_0)] = 0.\]
\end{theorem}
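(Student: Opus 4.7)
The plan is to exploit the classical fact that the sample mean and sample variance of a normal sample are independent, which forces $\widehat{\BFV}$ to be stochastically independent of the augmented data $(\overline{\BFy}^\intercal,\BFz^\intercal)^\intercal$. Once that independence is established, the expectation of the prediction error factors cleanly into pieces that are each zero by Proposition \ref{prop:joint_normal} and equation (\ref{eq:metamodel}).

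Concretely, I would first note that $\SFY(\BFx_i)$ cancels from every $y_j(\BFx_i)-\overline{y}(\BFx_i)$, so
\[
\widehat{\sigma}^2_\epsilon(\BFx_i) \;=\; \frac{1}{n_i-1}\sum_{j=1}^{n_i}\bigl(\epsilon_j(\BFx_i)-\overline{\epsilon}(\BFx_i)\bigr)^2
\]
depends only on the simulation errors $\{\epsilon_j(\BFx_i):j=1,\ldots,n_i\}$. Under Assumption \ref{asp:normal_errors}, the classical independence of sample mean and sample variance for a normal sample gives $\overline{\epsilon}(\BFx_i) \perp \widehat{\sigma}^2_\epsilon(\BFx_i)$ for each $i$. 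Chaining this with the cross-$i$ independence built into Assumption \ref{asp:normal_errors} and the mutual independence in Assumption \ref{asp:mutual_ind}, I conclude that $\widehat{\BFSigma}_\epsilon$---and therefore $\widehat{\BFV}$---is independent of the collection $(\SFM,\SFW,\{\overline{\epsilon}(\BFx_i)\},\{\zeta(\BFx_i)\})$, and hence independent of $(\overline{\BFy}^\intercal,\BFz^\intercal)^\intercal$.

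Next I would decompose
\[
\widehat{\widehat{\SFZ}}(\BFx_0) - \SFZ(\BFx_0) \;=\; \bigl[\rho\BFf^\intercal(\BFx_0)\BFbeta + \BFg^\intercal(\BFx_0)\BFgamma - \SFZ(\BFx_0)\bigr] + \BFC^\intercal \widehat{\BFV}^{-1}\BFu,
\]
where $\BFu$ denotes the centered augmented data vector obtained by subtracting the mean in (\ref{eq:multi_normal}). The bracket has mean zero by (\ref{eq:metamodel}), and $\E[\BFu]=\BFzero$ by Proposition \ref{prop:joint_normal}. The independence established above lets me pass the expectation through the product: $\E[\BFC^\intercal \widehat{\BFV}^{-1}\BFu] = \BFC^\intercal \E[\widehat{\BFV}^{-1}]\,\E[\BFu] = 0$. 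Adding the two contributions gives the claim.

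The main obstacle is really only the bookkeeping---articulating the independence structure cleanly in the presence of four distinct sources of randomness ($\SFM$, $\SFW$, $\{\epsilon_j(\BFx_i)\}$, and $\{\zeta(\BFx_i)\}$). The underlying statistical content is a one-liner, but it has to be carried through the layered assumptions to reach the joint conclusion. A peripheral technical point is integrability of $\widehat{\BFV}^{-1}$; this is handled by noting that $\widehat{\BFV}$ dominates, in the positive semidefinite order, the deterministic matrix obtained by setting $\BFSigma_\epsilon=\BFzero$ in (\ref{eq:Vmatrix}), which is positive definite under standard non-degeneracy of the design, so $\widehat{\BFV}^{-1}$ is almost surely bounded.
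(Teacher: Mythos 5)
Your proposal is correct and follows essentially the same route as the paper's proof: both rest on the classical independence of the sample mean and sample variance of a normal sample, conclude that $\widehat\BFV$ is independent of the augmented data $(\overline{\BFy}^\intercal,\BFz^\intercal)$, and then let the expectation of the centered data vector vanish via Proposition \ref{prop:joint_normal}. Your version is slightly more explicit in two welcome respects---observing that $\SFY(\BFx_i)$ cancels in $y_j(\BFx_i)-\overline{y}(\BFx_i)$ so that $\widehat\sigma_\epsilon^2(\BFx_i)$ depends only on the simulation errors, and justifying integrability of $\widehat\BFV^{-1}$ via the positive semidefinite lower bound---but the underlying argument is the same.
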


\begin{proof}

It is well known that the sample variance of a set of i.i.d. normal variables is independent of their sample mean; see, e.g., Example 5.6a in \citet[Chapter 5]{RencherSchaalje08}. Hence, $\widehat {\sigma}^2_\epsilon(\BFx_i)$ is independent of $\overline{y}(\BFx_i)$ by Assumptions \ref{asp:normal_errors} and \ref{asp:mutual_ind}. This implies that $\widehat{\BFSigma}_\epsilon$ is independent of $\overline{\BFy}$, and thus  $\widehat{\BFV} $ is independent of $(\overline{\BFy}^\intercal, \BFz)$. It follows that
\begin{align*}
\E\Big[\widehat{\widehat{\SFZ}}(\BFx_0) \Big] = & \E\Big[ \E\Big[\widehat{\widehat{\SFZ}}(\BFx_0)\Big|\widehat\BFV \Big]\Big] \\
=& \E\left[\rho\,\BFf^\intercal(\BFx_0) \BFbeta + \BFg^\intercal(\BFx_0)\BFgamma  +
\BFC^\intercal \widehat\BFV^{-1} \E  \left[
\begin{pmatrix}
\overline{\BFy} \\[0.5ex]
\BFz
\end{pmatrix}
 -
\begin{pmatrix}
\BFF(k) & \BFzero \\[0.5ex]
\rho\, \BFF(\ell) & \BFG
\end{pmatrix}
\begin{pmatrix}
\BFbeta \\\BFgamma
\end{pmatrix}
\right]
\right] \\
=& \rho\,\BFf^\intercal(\BFx_0) \BFbeta + \BFg^\intercal(\BFx_0)\BFgamma,
\end{align*}
where the last equality follows from Proposition \ref{prop:joint_normal}. Hence,   $\E[\widehat{\widehat{\SFZ}}(\BFx_0) ]  = \E[\SFZ(\BFx_0)]  $.  
\end{proof}

Let $\Xi\coloneqq (\rho, \BFbeta,\BFgamma,\tau_{\SFM}^2,\tau_{\SFW}^2,\BFtheta_{\SFM},\BFtheta_{\SFW}, \sigma_\zeta^2)$ denote the collection of the unknown parameters. We write $\BFV(\Xi)$ to stress its dependence on $\Xi$. It follows from Proposition \ref{prop:joint_normal} that, given $\BFSigma_\epsilon$ the log-likelihood function of the augmented data is
\[
\CalL(\Xi) \coloneqq -\frac{1}{2}(k+\ell)\ln(2\pi) - \frac{1}{2}\ln|\BFV(\Xi)| - \frac{1}{2}
\begin{pmatrix}
\overline{\BFy} - \BFF(k)\BFbeta \\[0.5ex]
\BFz - \rho\BFF(\ell)\BFbeta - \BFG\BFgamma
\end{pmatrix}^\intercal
\BFV(\Xi)^{-1}
\begin{pmatrix}
\overline{\BFy} - \BFF(k)\BFbeta \\[0.5ex]
\BFz - \rho\BFF(\ell)\BFbeta - \BFG\BFgamma
\end{pmatrix},
\]
where $|\BFV(\Xi)|$ is the determinant of $\BFV(\Xi)$.

The MLE can be solved via the first-order optimality conditions:  we set the first-order partial derivative of $\CalL(\Xi)$ with respect to each component of $\Xi$ to zero, and solve the resulting system of equations. The derivatives can be calculated using standard results for matrix calculus. We present them in Appendix \ref{app:MLE} and refer to \citet[Chapter 5]{fang2006design} for related numerical methods.

To summarize, given the augmented data $\{y_j(\BFx_i):j=1,\ldots,n_i, i=1,\ldots,k\}$ and $\{z_i:i=1,\ldots,\ell\}$, an SK-i metamodel is constructed as follows.
\begin{enumerate}[label=(\roman*)]
\item
Estimate $\BFSigma_\epsilon$ using $\widehat\BFSigma_\epsilon = \mathrm{Diag}(\widehat{\sigma}_\epsilon^2(\BFx_1),\ldots, \widehat{\sigma}_\epsilon^2(\BFx_k))$, where $\widehat{\sigma}_\epsilon^2(\BFx_i)$ is given by \eqref{eq:sample_var}.
\item
Using  $\widehat\BFSigma_\epsilon$ instead of $\BFSigma_\epsilon$, maximize $\CalL(\Xi)$ to find $\widehat\Xi = (\widehat\rho, \widehat\BFbeta,\widehat\BFgamma,\widehat\tau_{\SFM}^2,\widehat\tau_{\SFW}^2,\widehat\BFtheta_{\SFM},\widehat\BFtheta_{\SFW}, \widehat\sigma_\zeta^2)$.
\item
Predict $\SFZ(\BFx_0)$ by the plug-in predictor
\[\widehat{\widehat{\SFZ}}(\BFx_0) = \widehat\rho\,\BFf^\intercal(\BFx_0) \widehat\BFbeta + \BFg^\intercal(\BFx_0)\widehat\BFgamma + \widehat\BFC^\intercal \widehat\BFV^{-1}\left[
\begin{pmatrix}
\overline{\BFy} \\[0.5ex]
\BFz
\end{pmatrix}
 -
\widehat\BFH
\begin{pmatrix}
\widehat\BFbeta \\ \widehat\BFgamma
\end{pmatrix}
\right],\]
where
\[
\widehat\BFH =
\begin{pmatrix}
\BFF(k) & \BFzero \\[0.5ex]
\widehat\rho\, \BFF(\ell) & \BFG
\end{pmatrix}
,\quad \widehat\BFC =
\begin{pmatrix}
\widehat\rho\,\widehat\tau^2_{\SFM} \BFR_{\BFM(k)}(\BFx_0,\cdot; \widehat\BFtheta_{\SFM}) \\[0.5ex]
\widehat\rho^2 \widehat\tau^2_{\SFM} \BFR_{\BFM(\ell)}(\BFx_0,\cdot;\widehat\BFtheta_{\SFM})+ \widehat\tau^2_{\SFW}\BFR_{\BFW}(\BFx_0,\cdot;\widehat\BFtheta_{\SFW})
\end{pmatrix},
\]
and
\[\widehat \BFV = \begin{pmatrix}
\widehat \tau^2_{\SFM}\BFR_{\BFM(k)}(\widehat\BFtheta_{\SFM}) + \widehat\BFSigma_\epsilon & \widehat\rho\,\widehat\tau_{\SFM}^2\BFR_{\BFM(k),\BFM(\ell)}(\widehat\BFtheta_{\SFM}) \\[0.5ex]
\widehat\rho\,\widehat\tau_{\SFM}^2\BFR_{\BFM(k),\BFM(\ell)}^\intercal (\widehat\BFtheta_{\SFM}) & \widehat\rho^2\widehat\tau_{\SFM}^2\BFR_{\BFM(\ell)}(\widehat\BFtheta_{\SFM}) + \widehat\tau_{\SFW}^2\BFR_{\BFW}(\widehat\BFtheta_{\SFW})+\widehat\sigma^2_\zeta\BFI_\ell
\end{pmatrix}.\]
The MSE estimator is
\[\widehat\rho^2\widehat\tau_{\SFM}^2+ \widehat\tau_{\SFW}^2 - \widehat\BFC^\intercal \widehat\BFV^{-1} \widehat\BFC  + \widehat\BFd^\intercal \left[ \widehat\BFH\widehat\BFV^{-1}\widehat\BFH \right]^{-1} \widehat\BFd ,\]
where $\widehat\BFd =
(\widehat\rho\, \BFf^\intercal(\BFx_0), \BFg^\intercal(\BFx_0))^\intercal
-\widehat\BFH \widehat\BFV^{-1}\widehat\BFC$; see \citet[\S1.5]{stein1999interpolation} for a similar derivation.
\end{enumerate}

\subsection{Experiment Design}

In this section, we discuss briefly how to allocate a total sampling budget of $N$ replications across a set of fixed design points $\{\BFx_1,\ldots,\BFx_k\}$ in order to minimize the integrated MSE (IMSE). Let $\BFn\coloneqq (n_1,\ldots,n_k)^\intercal$ and  $\mathfrak X$ is the experiment design space in $\Real^d$ of interest. Our goal here is to
\begin{equation}\label{allocation_problem}
\begin{aligned}
\mbox{minimize}&\quad  \mathrm{IMSE}(\BFn)\coloneqq \displaystyle\int_{\mathfrak X} \MSE^*(\BFx_0;\BFn)\,\ud\BFx_0 \\
\mbox{s.t.}&\quad  \mathbf{n}^\intercal \BFone_{k}\leq N \\
& \quad n_i\in\mathbb{N}, \;i=1,\ldots,k,
\end{aligned}
\end{equation}
where $ \BFone_{k}$ denotes the $k$-dimensional vector of all ones and  $\MSE^*(\BFx_0;\BFn)$ is given by \eqref{eq:optimal_MSE} and rewritten as follows to emphasize its dependence on $\BFx_0$ and $\BFn$,
\begin{align*}
\MSE^*(\BFx_0;\BFn) = & \rho^2 \Sigma_{\SFM}(\BFx_0,\BFx_0)  + \Sigma_{\SFW}(\BFx_0, \BFx_0) - \BFC^\intercal(\BFx_0) \BFV^{-1}(\BFn)\BFC(\BFx_0) \\
=& \rho^2 \tau_{\SFM}^2 +\tau_{\SFW}^2 -\sum_{i,j=1}^{k+\ell}[\BFV^{-1}(\BFn)]_{ij}C_i(\mathbf{x}_0)C_j(\mathbf{x}_0)
\end{align*}
under Assumption \ref{asp:stationary}, where $C_i$ is the $i^{\mathrm{th}}$ element of $\BFC$. Let $\BFG$ be the $(k+\ell)\times (k+\ell)$ matrix with elements $G_{ij} = \int_{\mathfrak X}C_i(\mathbf{x}_0)C_j(\mathbf{x}_0)\ud \BFx_0 $. Then, 
\[
\IMSE(\BFn) = \rho^2 \tau_{\SFM}^2 +\tau_{\SFW}^2 - \BFone_{k+\ell}^\intercal [\BFG\circ \BFV^{-1}(\BFn)] \BFone_{k+\ell},\]
where $\circ$ denotes the element-wise product of matrices. 

To obtain a tractable solution to the optimization problem \eqref{allocation_problem}, we relax its integrality constraint and replace it with $n_i\geq 0$, $i=1,\ldots,k$. Then, we can form the Lagrangian 
\[
\mathscr{L}(\BFn)\coloneqq\rho^2 \tau_{\SFM}^2 +\tau_{\SFW}^2 - \BFone_{k+\ell}^\intercal [\BFG\circ \BFV^{-1}(\BFn)] \BFone_{k+\ell}+\lambda(N-\BFone_{k}^\intercal\BFn).
\]
The first-order optimality conditions are 
\begin{equation}\label{eq:first-order-opt}
\frac{\partial \mathscr{L}(\BFn)}{\partial n_i} =  - \BFone_{k+\ell}^\intercal \left[\BFG\circ \frac{\partial }{\partial n_i}\BFV^{-1}(\BFn)\right] \BFone_{k+\ell}-\lambda = 0,\quad i=1,\ldots,k.
\end{equation}
Notice that for $i=1,\ldots,k$,
\begin{equation}\label{eq:deriv_V_inverse}
\frac{\partial }{\partial n_i}\BFV^{-1}(\BFn) = -\BFV^{-1}(\BFn) \left[ \frac{\partial }{\partial n_i}\BFV(\BFn)\right] \BFV^{-1}(\BFn) = \frac{\sigma^2_\epsilon(\BFx_i)}{n_i^2} [\BFV^{-1}(\BFn) \BFJ^{(ii)} \BFV^{-1}(\BFn)]
\end{equation}
where $\BFJ^{(ii)}$ is a $(k+\ell)\times (k+\ell)$ matrix with 1 in position $(i,i)$ and zeros elsewhere.  It can be shown by direct calculation that 
\begin{equation}\label{eq:hadamard_prod}
\BFone_{k+\ell}^\intercal \left[\BFG\circ  [\BFV^{-1}(\BFn) \BFJ^{(ii)} \BFV^{-1}(\BFn)]\right] \BFone_{k+\ell} = [\BFV^{-1}(\BFn)\BFG \BFV^{-1}(\BFn)]_{ii},
\end{equation}
using the fact that $\BFG$ is a symmetric matrix. It then follows from \eqref{eq:first-order-opt}, \eqref{eq:deriv_V_inverse}, and \eqref{eq:hadamard_prod} that
\[\frac{\sigma^2_\epsilon(\BFx_i)}{n_i^2} [\BFV^{-1}(\BFn)\BFG \BFV^{-1}(\BFn)]_{ii} = \lambda, \quad i=1,\ldots,k. \]
Hence, the optimal solution to \eqref{allocation_problem} satisfies $n_i^*\propto \sqrt{\sigma^2_\epsilon(\BFx_i) [\BFV^{-1}(\BFn)\BFG \BFV^{-1}(\BFn)]_{ii}}$. When $N$ is large enough, we have $\BFSigma_\epsilon \approx \BFzero$ and thus
\[\BFV(\BFn) \approx 
\begin{pmatrix}
\tau^2_{\SFM}\BFR_{\BFM(k)}(\BFtheta_{\SFM}) & \rho\tau_{\SFM}^2\BFR_{\BFM(k),\BFM(\ell)}(\BFtheta_{\SFM}) \\[0.5ex]
\rho\tau_{\SFM}^2\BFR_{\BFM(k),\BFM(\ell)}^\intercal (\BFtheta_{\SFM}) & \rho^2\tau_{\SFM}^2\BFR_{\BFM(\ell)}(\BFtheta_{\SFM}) + \tau_{\SFW}^2\BFR_{\BFW}(\BFtheta_{\SFW})+\sigma^2_\zeta\BFI_\ell
\end{pmatrix}
\coloneqq \widetilde \BFV.
\]
This suggests the following approximate solution to \eqref{allocation_problem}: 
\begin{equation}\label{eq:optimal_allocation}
n_i^*\approx N \frac{\sqrt{\sigma^2_\epsilon(\BFx_i) [\widetilde\BFV^{-1}\BFG \widetilde\BFV^{-1}]_{ii}}}{\sum_{j=1}^k\sqrt{\sigma^2_\epsilon(\BFx_j) [\widetilde\BFV^{-1}\BFG \widetilde\BFV^{-1}]_{jj}}},\quad i=1,\ldots,k.
\end{equation}
In practice, neither $\sigma^2_\epsilon(\BFx_i)$ nor the parameters $\Xi=(\rho, \BFbeta,\BFgamma,\tau_{\SFM}^2,\tau_{\SFW}^2,\BFtheta_{\SFM},\BFtheta_{\SFW}, \sigma_\zeta^2)$ are unknown. We can adopt a two-stage design strategy as follows. In the first stage, we allocate $n_0$ replications to each design point $\BFx_i$, $i=1,\ldots,k$ and estimate $\sigma^2_\epsilon(\BFx_i)$ and $\Xi$ using \eqref{eq:sample_var} and MLE, respectively. (The value of $n_0$ should  exceed 10 to obtain meaningful estimates.) In the second stage, we allocate the $N-km_0$ additional replications among the design points using the approximate formula \eqref{eq:optimal_allocation}, update the estimates of $\BFSigma_\epsilon$ and $\Xi$, and predict $\SFZ(\BFx_0)$ following the procedure described at the end of \S\ref{sec:MLE}.

\section{Numerical Experiment}\label{sec:experiments}
In this section, we compare numerically the three competing methods (i.e., GPR, SK, and SK-i) for in terms of prediction accuracy. The following example is adopted from Simulation Optimization Library (\texttt{simopt.org}). Consider a production line consisting of three service stations, each having a single server and a finite capacity. The parts are processed at each station on a first-in-first-out basis. Once its service is completed at station $n$, a part is moved to station $n+1$, provided that the downstream station is not full; otherwise, the part is blocked, staying at station $n$, and occupies the server. We assume that the parts arrive at the production line following a Poisson process with unit rate and each station has a capacity of 5. We also assume that  the ``real'' system has service times with the gamma distribution whereas the inadequate simulation model has service times with the exponential distribution. 

Define the design variable $\BFx=(x_1,\ldots,x_6)^\intercal$ as follows: $x_i$ and $x_{i+3}$ are respectively the mean and the variance of the service time of station $i$, $i=1,2,3$. Suppose that the performance measure of interest $\SFZ(\BFx)$ is the expected sojourn time through the production line, which is estimated based on all the parts generated within a time horizon $T=60000$.

Let $\mathfrak X\subset \Real^6$ denote the design space of interest. We consider three design spaces
\begin{itemize}
\item 
$\mathfrak X_1=[0.2,0.3]^2\times[0.7,0.95]\times[0.05,0.1]^2\times[0.8,0.9]$; 
\item 
$\mathfrak X_2=[0.2,0.3]^2\times[0.7,0.95]\times[0.2,0.3]^2\times[1.0,1.1]$;
\item 
$\mathfrak X_3=[0.2,0.3]^2\times[0.7,0.95]\times[0.3,0.4]^2\times[1.2,1.3]$.
\end{itemize}
Notice that the variance of the exponential distribution is the square of the mean. The three design spaces basically represent scenarios where the simulation model is close to the real system, moderately inadequate, and highly inadequate, respectively. 

 Suppose that the simulation model is executed at $k=40$ design points in $\mathfrak X$, which are generated using  Latin hypercube sampling (LHS); see \citet[Chapter 2]{fang2006design}. Given a computational budget $N=\sum_{i=1}^kn_i$, the number of replications $n_i$ at each design point $\BFx_i$ is computed via \eqref{eq:optimal_allocation}. Moreover, suppose that the performance measure of the ``real'' system is observed at $\ell=20$ locations, which are chosen randomly from the $k$ design points. The observations $(z_1,\ldots,z_\ell)$ are generated by simulating the ``real'' system with $10000$  total number of replications allocated to the $\ell$ locations following the rule  \eqref{eq:optimal_allocation} as well. 

In order to compare the prediction accuracy, we predict $\SFZ(\BFx_0)$ at $K=1000$ different locations $\{\BFx^{(i)}:i=1,\ldots,K\}$  generated using LHS. Since the predictor $\hat \SFZ(\BFx^{(i)})$ is subject to the randomness of both simulation errors and observations errors, we conduct $R=100$ macro-replications of the experiment and use the following estimated MSE (EMSE) to compare the three prediction methods,
\begin{equation}\label{eq:EMSE}
\mbox{EMSE} = \frac{1}{R}\sum_{r=1}^R\frac{1}{K}\sum_{i=1}^K\left[\hat\SFZ(\BFx^{(i)})-\SFZ(\BFx^{(i)})\right]^2,
\end{equation}
where  the subscript $r$ denotes the $r^{\mathrm{th}}$ macro-replication, and the unknown true value $\SFZ(\BFx)$ is replaced with estimates based on extensive simulation until errors are negligible.

We assume $\BFf(x)\equiv 1$ and $\BFg(x)\equiv 1$ and adopt the squared exponential correlation function 
\[\CalR_{\SFM}(\BFx-\BFx';\theta_{\SFM}) = \exp\left(-\theta_{\SFM}\sum_{i=1}^6(x_i-x'_i)^2\right) \quad\mbox{and}\quad 
\CalR_{\SFW}(\BFx-\BFx';\theta_{\SFW}) = \exp\left(-\theta_{\SFW}\sum_{i=1}^6(x_i-x'_i)^2\right). 
 \]

\begin{table}[t]
\centering    
  \begin{tabular}{ccccccccccccc}
   \toprule
    \multirow{2}{*}{} &&
      \multicolumn{3}{c}{$\mathfrak X_1$} &&
      \multicolumn{3}{c}{$\mathfrak X_2$} &&
      \multicolumn{3}{c}{$\mathfrak X_3$} \\
      \cmidrule{3-5} \cmidrule{7-9} \cmidrule{11-13}
      && {$N=7000$} && {$N=700$} && {$N=7000$} && {$N=700$}&& {$N=7000$} && {$N=700$} \\
      \midrule
    GPR && 0.752 && 0.782  && 1.031 && 1.001 && 2.425 && 2.431 \\
    SK && 0.555 && 0.632 && 5.050 && 5.872 && 17.27 && 18.71\\
     SK-i && 0.461 &&0.573&& 0.611&& 0.703 && 1.643 && 1.726 \\
    \bottomrule
  \end{tabular}
  \caption{EMSE  for the Production Line Example.} \label{tab:production}
\end{table}

The results are presented in Table $\ref{tab:production}$. As suggested by Theorem \ref{theo:MSE_comparison}, SK-i has the best prediction performance in all the cases. Incorporating observations of the real system can improve substantially the prediction accuracy of SK, especially when the simulation model is highly inadequate as in the case of $\mathfrak X_3$. On the other hand, incorporating the simulation outputs can improve substantially the prediction accuracy of GPR as well, even if the simulation model is inadequate. 




\section{Effect of Common Random Numbers}\label{sec:CRN}

Common Random Numbers (CRN) is a variance reduction technique that is applied widely in practice thanks to its ease of use. It is known that use of CRN generally increases the MSE of the SK metamodel, thereby leading to its deteriorated performance for predicting the response surface; see, e.g., \cite{chen2012effects}. Nevertheless, this is not necessarily true when the model discrepancy is taken into account. The effect of CRN is significantly more complex for the SK-i metamodel. It may be either detrimental or beneficial to prediction, depending on the magnitude of the observation errors. In this section, we first analyze the effect of CRN via a stylized models to gain insights, and then demonstrate it numerically for a general setting. 

Notice that if $\rho=0$, then the simulation model provides no information about the real system by the definition of $\rho$ in \eqref{real_formulation}. This implies that the use of CRN has no effect on the prediction of the real system, regardless of the correlation structure introduced in the simulation errors. This can also be seen easily by setting $\rho=0$ in \eqref{eq:optimal_MSE}, leading to
\begin{align*}
\MSE^*(\hat\SFZ(\BFx_0)) = & \Sigma_{\SFW}(\BFx_0,\BFx_0) - 
\begin{pmatrix}
\BFzero \\ 
\BFSigma_{\BFW}(\BFx_0,\cdot) 
\end{pmatrix}^\intercal 
\begin{pmatrix}
\BFSigma_{\BFM(k)} + \BFSigma_\epsilon & \BFzero \\
\BFzero & \BFSigma_{\BFW} + \BFSigma_\zeta
\end{pmatrix}^{-1}
\begin{pmatrix}
\BFzero \\ 
\BFSigma_{\BFW}(\BFx_0,\cdot) 
\end{pmatrix} \\
= & \Sigma_{\SFW}(\BFx_0,\BFx_0) - \BFSigma_{\BFW}^\intercal (\BFx_0,\cdot) [\BFSigma_{\BFW} + \BFSigma_\zeta]^{-1}\BFSigma_{\BFW}(\BFx_0,\cdot),
\end{align*}
which is independent of $\BFSigma_\epsilon$. Since CRN takes effect only through the simulation outputs, it follows that CRN has no effect on $\MSE^*(\hat\SFZ(\BFx_0))$.  However, the case of $\rho=0$ rarely occurs in practice since the simulation model is constructed to approximate the real system in the first place. Hence, we assume without loss of generality that $\rho\neq 0$ in the sequel.


\subsection{A Two-Point Model}\label{sec:two_point_model}
Consider the case of $k=2$ and $\ell=1$, that is, the simulation model is executed at $\BFx_1$ and $\BFx_2$ and the performance of the real system is observed at $\BFx_1$. The use of CRN introduces dependence between the simulation errors at different design points, and thus $\BFSigma_\epsilon$, the covariance matrix of $(\overline{\epsilon}(\BFx_1),\ldots,\overline{\epsilon}(\BFx_k))$, is no longer a diagonal matrix. In particular, the anticipated effect of CRN is to cause its off-diagonal elements to be positive. In order make the analysis tractable, we make the following assumption that is standard in simulation literature for developing insight.

\begin{assumption}\label{asp:CRN} 
The sample average of the simulation errors, $(\overline{\epsilon}(\BFx_1),\overline{\epsilon}(\BFx_2))$, have bivariate normal distribution with mean $\BFzero$ and covariance matrix 
\[\BFSigma_\epsilon = v 
\begin{pmatrix} 
1 & \omega  \\
\omega & 1
\end{pmatrix}
\]
for some $v>0$ and $\omega\in[0,1]$. Moreover, 
\[\Corr(\SFM(\BFx_0),\SFM(\BFx_1))=\Corr(\SFM(\BFx_0),\SFM(\BFx_2))=\Corr(\SFW(\BFx_0),\SFW(\BFx_1))=r_0,\]
for some $r_0>0$.
\end{assumption}

Let $\MSE^*(\omega)$ denote the MSE of the SK-i predictor as a function of $\omega$. Then, we can determine whether the MSE is increasing or decreasing in $\omega$ by analyzing the sign of $\frac{\ud \MSE^*(\omega)}{\ud \omega}$. By \eqref{eq:optimal_MSE}, 
\begin{equation*}\label{eq:long_MSE}
\begin{aligned}
 & \MSE^*(\omega)\\ =& \rho^2\tau_{\SFM}^2+\tau_{\SFW}^2 \\
& - 
\begin{pmatrix}
\rho \tau_{\SFM}^2 
\begin{pmatrix}
r_0 \\ 
r_0 
\end{pmatrix} \\
\rho^2 \tau_{\SFM}^2 r_0 + \tau_{\SFW}^2 r_0
\end{pmatrix}^\intercal
\begin{pmatrix}
\tau_{\SFM}^2  
    \begin{pmatrix}
    1 & r_{12} \\
    r_{12} & 1 
    \end{pmatrix} + 
v 
    \begin{pmatrix}
    1 & \omega \\
    \omega & 1
    \end{pmatrix}
 &  
\rho \tau_{\SFM}^2 
    \begin{pmatrix}
     1 \\ 
     r_{12}
    \end{pmatrix}
\\
\rho \tau_{\SFM}^2 
    \begin{pmatrix}
    1 & &  r_{12}
    \end{pmatrix}
& \rho^2 \tau_{\SFM}^2 + \tau_{\SFW}^2 + \sigma_\zeta^2
\end{pmatrix}^{-1}   
\begin{pmatrix}
\rho \tau_{\SFM}^2 
    \begin{pmatrix}
    r_0 \\ 
    r_0 
    \end{pmatrix} \\
\rho^2 \tau_{\SFM}^2 r_0 + \tau_{\SFW}^2 r_0
\end{pmatrix},
\end{aligned}
\end{equation*}
where $r_{12}=\Corr(\SFM(\BFx_1), \SFM(\BFx_2))$. The analysis of $\frac{\ud \MSE^*(\omega)}{\ud \omega}$ is straightforward but lengthy. We present the result below but defer the explicit calculations to Appendix \ref{app:CRN}. 

\begin{theorem}\label{theo:CRN_2point}
Suppose that Assumptions \ref{asp:obs_errors} --  \ref{asp:CRN} hold in the two-point model. If $\rho\neq 0$, then 
\begin{enumerate}[label=(\roman*)]
\item 
$\MSE^*(\omega)$ is decreasing in $\omega\in[0,1]$ if $\sigma_\zeta^2 \leq \tau_{\SFM}^2\tau_{\SFW}^2 r_{12}(1-r_{12})/[\tau_{\SFM}^2(1-r_{12})+v]$;
\item 
$\MSE^*(\omega)$ is increasing in $\omega\in[0,1]$ if $\sigma_\zeta^2 \geq \tau_{\SFW}^2r_{12}+v(\rho^2+\tau_{\SFW}^2/\tau_{\SFM}^2)$;
\item 
otherwise, $\MSE^*(\omega)$ is increasing in $\omega\in[0,\omega^*]$ and decreasing in $\omega\in[\omega^*,1]$ for some $\omega^*\in(0,1)$.
\end{enumerate}
\end{theorem}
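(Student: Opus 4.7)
The strategy is to reduce the monotonicity question to the sign of a single linear function of $\omega$. Since $\BFC$ in \eqref{eq:optimal_MSE} does not depend on $\omega$ and only the off-diagonal entries of the top-left block $\BFV_{\BFone\BFone}$ depend on $\omega$ (each with slope $v$), the matrix-calculus identity $\partial\BFV^{-1}/\partial\omega = -\BFV^{-1}(\partial\BFV/\partial\omega)\BFV^{-1}$ gives
\begin{equation*}
\frac{\ud \MSE^*(\omega)}{\ud \omega} = \BFC^\intercal \BFV^{-1}\frac{\ud\BFV}{\ud\omega}\BFV^{-1}\BFC = 2v\,u_1 u_2,
\end{equation*}
where $(u_1,u_2,u_3)^\intercal \coloneqq \BFV^{-1}\BFC$. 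Thus the monotonicity of $\MSE^*$ on $[0,1]$ is controlled entirely by the sign of $u_1 u_2$.

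To isolate $u_1 u_2$, I would use the block inversion formula with respect to the $(2,1)$ partition in \eqref{eq:Vmatrix}:
\begin{equation*}
(u_1,u_2)^\intercal = \BFS^{-1}\bigl(\BFC_{\BFone} - \BFV_{\BFone\BFtwo}\BFV_{\BFtwo\BFtwo}^{-1}\BFC_{\BFtwo}\bigr),\qquad \BFS \coloneqq \BFV_{\BFone\BFone} - \BFV_{\BFone\BFtwo}\BFV_{\BFtwo\BFtwo}^{-1}\BFV_{\BFone\BFtwo}^\intercal.
\end{equation*}
Substituting the explicit $2\times 2$ adjugate for $\BFS^{-1}$, the common positive prefactor $(\rho r_0\tau_{\SFM}^2)^2/(\det\BFS)^2$ factors out, leaving $\operatorname{sgn}(u_1 u_2) = \operatorname{sgn}\bigl(A_1(\omega) A_2(\omega)\bigr)$, where, setting $\alpha \coloneqq (\rho^2\tau_{\SFM}^2+\tau_{\SFW}^2)/D$ and $\beta\coloneqq \rho^2\tau_{\SFM}^2/D$ with $D\coloneqq \rho^2\tau_{\SFM}^2+\tau_{\SFW}^2+\sigma_\zeta^2$,
\begin{align*}
A_1(\omega) &= [\tau_{\SFM}^2(1-\beta r_{12}^2)+v](1-\alpha) - [r_{12}\tau_{\SFM}^2(1-\beta)+v\omega](1-\alpha r_{12}), \\
A_2(\omega) &= -[r_{12}\tau_{\SFM}^2(1-\beta)+v\omega](1-\alpha) + [\tau_{\SFM}^2(1-\beta)+v](1-\alpha r_{12}).
\end{align*}
Both are affine and non-increasing in $\omega$ with respective slopes $-v(1-\alpha r_{12})$ and $-v(1-\alpha)$.

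The next step is to show that $A_2(\omega)>0$ on $[0,1]$, so that the sign of the derivative is governed by $A_1$ alone. Direct evaluation gives $A_2(1) = (1-r_{12})[\tau_{\SFM}^2(1-\beta)+v\alpha] > 0$, and since $A_2$ is non-increasing, $A_2(\omega) \geq A_2(1) > 0$ on $[0,1]$. Hence $\operatorname{sgn}(\ud\MSE^*/\ud\omega) = \operatorname{sgn}(A_1(\omega))$, and because $A_1$ is strictly decreasing, the trichotomy in the theorem corresponds exactly to the three possible sign patterns of a strictly decreasing linear function on $[0,1]$: always non-positive, always non-negative, or a single sign change in the interior.

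The final and most tedious step is to simplify $A_1(0)$ and $A_1(1)$ into the bounds appearing in the theorem statement. The key algebraic identity is
\begin{equation*}
1 + \beta r_{12} - \alpha(1+r_{12}) = (\sigma_\zeta^2 - r_{12}\tau_{\SFW}^2)/D,
\end{equation*}
which after collecting terms yields
\begin{equation*}
A_1(0) = \tfrac{1}{D}\bigl\{\sigma_\zeta^2[\tau_{\SFM}^2(1-r_{12})+v] - \tau_{\SFM}^2\tau_{\SFW}^2\,r_{12}(1-r_{12})\bigr\},
\end{equation*}
\begin{equation*}
A_1(1) = \tfrac{(1-r_{12})\tau_{\SFM}^2}{D}\bigl\{\sigma_\zeta^2 - \tau_{\SFW}^2 r_{12} - v(\rho^2 + \tau_{\SFW}^2/\tau_{\SFM}^2)\bigr\}.
\end{equation*}
Case (i) is precisely $A_1(0)\leq 0$; case (ii) is $A_1(1)\geq 0$; and case (iii) is the intermediate regime $A_1(0) > 0 > A_1(1)$, in which the intermediate value theorem gives a unique $\omega^*\in(0,1)$ at which $A_1$ changes sign, producing the claimed unimodal $\MSE^*$. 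The main obstacle is managing this book-keeping in the simplification of $A_1(0)$ and $A_1(1)$; once the Schur-complement reduction and the positivity of $A_2$ are in place, the rest is a direct consequence.
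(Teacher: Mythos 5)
Your proof is correct, and it takes a genuinely different route from the paper's. The paper works with the fully expanded closed form of $\MSE^*(\omega)$, writes $\ud\MSE^*(\omega)/\ud\omega = -2\rho^2\tau_{\SFM}^2 r_0^2 H(\omega)/\{[\tau_{\SFM}^2(1+r_{12})+v(1+\omega)]^2B^2(\omega)\}$ for an explicit nonlinear $H$, and obtains the trichotomy by proving $H'(\omega)>0$ via a term-by-term decomposition of $H'$ into manifestly nonnegative pieces --- essentially a brute-force verification. You instead exploit the structure of \eqref{eq:optimal_MSE}: since $\omega$ enters only through the off-diagonal of $\BFV_{\BFone\BFone}$, the identity $\partial\BFV^{-1}/\partial\omega=-\BFV^{-1}(\partial\BFV/\partial\omega)\BFV^{-1}$ collapses the derivative to $2v\,u_1u_2$ with $\BFu=\BFV^{-1}\BFC$, and the Schur-complement block inverse factors $u_1u_2$ --- up to the prefactor $(\rho r_0\tau_{\SFM}^2)^2/(\det\BFS)^2$, which is where $\rho\neq 0$ and $r_0>0$ enter --- into a product of two affine functions $A_1(\omega)A_2(\omega)$. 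I checked the pieces: $A_2(1)=(1-r_{12})[\tau_{\SFM}^2(1-\beta)+v\alpha]>0$ combined with $A_2$ non-increasing does give $A_2>0$ on $[0,1]$; your endpoint evaluations $A_1(0)$ and $A_1(1)$ simplify exactly to the two thresholds in the statement; and the sign pattern of the strictly decreasing affine $A_1$ reproduces cases (i)--(iii), consistently with the paper's increasing $H$ entering with a minus sign. What your route buys is that the monotonicity of the controlling function is immediate from affineness rather than requiring the paper's laborious check that $H'>0$, and the vector $\BFC_{\BFone}-\BFV_{\BFone\BFtwo}\BFV_{\BFtwo\BFtwo}^{-1}\BFC_{\BFtwo}$ ties the CRN analysis back to Theorem \ref{theo:MSE_comparison}(i); what the paper's computation buys is the explicit expression for $\MSE^*(\omega)$ itself. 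One shared implicit caveat: both arguments require the non-degenerate case $r_{12}<1$ (at $r_{12}=1$ your strict positivity of $A_2(1)$ and strict decrease of $A_1$, and likewise the paper's positivity of $A(\omega)$, all break down), which is worth stating explicitly.
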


Theorem \ref{theo:CRN_2point} represents a stark contrast to the prior result in simulation literature that the use of CRN generally increases the MSE of SK. The contrast stems from the presence of two distinct response surfaces in our context -- that of the simulation model $\SFY(\BFx)$ and that of the real system $\SFZ(\BFx)$, whereas only the former is of relevance in typical usage of SK. Moreover,  the data used by SK-i consists of two parts, i.e., the simulation outputs  $\overline{\BFy}$ and the observations of the real system $\BFz$.  CRN introduces positive dependence in the errors of the former, but has no effect on the latter. Using CRN is indeed detrimental to the prediction of $\SFY(\BFx)$. It is, however, not necessarily the case for the prediction of $\SFZ(\BFx)$. 

Statement (i) of Theorem \ref{theo:CRN_2point} is of particular interest. It suggests that if the real system is observed with little errors, i.e., $\sigma_\zeta^2\approx 0$, then the use of CRN is beneficial to the prediction of $\SFZ(\BFx)$, and the benefit increases as $\omega$ increases. This can be interpreted intuitively as follows. Following \eqref{real_formulation}, the prediction of $\SFZ(\BFx)$ essentially comprises the prediction of $\SFY(\BFx)$ and the prediction of the model discrepancy $\delta(\BFx)$. The additional positive dependence that CRN introduces in $\overline{\BFy}$  helps SK-i utilize the augmented data $(\overline{\BFy},\BFz)$  more effectively for quantifying $\delta(\BFx)$, making its prediction more accurate. Consequently, the net effect of CRN depends on whether the benefit of the use of CRN in predicting $\delta(\BFx)$ dominates its detriment to the prediction of $\SFY(\BFx)$, or the opposite is true. If $\BFz$ has negligible errors, then the net effect of CRN is beneficial. On the other hand, as suggested by statement (ii), the use of CRN turns detrimental if $\BFz$ has great errors, i.e., $\sigma_\zeta^2$ is sufficiently large. 

If $\sigma_\zeta^2$ is in the middle range of values, then the effect of CRN depends additionally on the value of $\omega$. Since $\MSE^*(\omega)$ first increases and then decreases as $\omega$ increases from 0 to 1, we expect that $\MSE^*(\omega)\geq \MSE^*(0)$ for small $\omega$. The scenario where  $\MSE^*(\omega)<\MSE^*(0)$ is possible but not necessary to occur. 

There are two messages from the analysis of this two-point model. First, in the presence of model discrepancy, the interplay between various types of uncertainty arising from the two distinct response surfaces and the two sets of data is substantially more sophisticated than that in the SK metamodel itself, which involves only $\SFY(\BFx)$ and $\overline{\BFy}$. As a result, the effect of CRN for SK-i is significantly more complex than it is for SK. Second, albeit counterproductive for  predicting $\SFY(\BFx)$, CRN is indeed helpful for predicting $\SFZ(\BFx)$, provided that the errors in $\BFz$ are small enough. Otherwise, CRN is not recommended for SK-i.

\subsection{Illustration}\label{sec:illustration_CRN}
We have shown via a simple example that the effect of CRN on the prediction of SK-i depends in a nontrivial way on the accuracy of the observations of the real system relative to the magnitude of other types of uncertainty in the metamodel. Despite the fact that the example is highly stylized and imposes stringent constraints on the values of the parameters,  the insights developed there are indeed valid in general as illustrated below numerically.  

Let $\SFM(x)$ and $\SFW(x)$ be two independent one-dimensional Gaussian random fields with $x\in[0,1]$. Their covariance functions are $\Cov(\SFM(x),\SFM(x'))=\tau_{\SFM}^2\exp(-\theta_{\SFM}(x-x')^2)$ and $\Cov(\SFW(x),\SFW(x'))=\tau_{\SFW}^2\exp(-\theta_{\SFW}(x-x')^2)$, respectively. Suppose that the response surface of the simulation model $\SFY(x)$ is a random realization of $\SFM(x)$, and that of the real system $\SFZ(x)$ is the sum of $\SFY(x)$ and a random realization of  $\SFW(x)$, i.e., $\SFY(x)=\SFM(x)$ and $\SFZ(x)=\SFY(x)+\SFW(x)$. 

We set the design points to be $\{0.0, 0.1,\ldots,0.9,1.0\}$, so $k=11$. The simulation errors $(\epsilon(x_1),\ldots,\epsilon(x_k))$ at the design points $(x_1,\ldots,x_k)$ are generated from the multivariate normal distribution with mean $\BFzero$, marginal variance $\Var(\epsilon(x_i)) = \sigma_\epsilon^2 $, $i=1,\ldots,k$, and correlation $\Corr(\epsilon(x_i),\epsilon(x_j))=\omega>0$, $i\neq j$. Then, for each $i=1,\ldots,k$, a simulation output at $x_i$ is $\SFY(x_i) + \epsilon(x_i)$ and we make 10 replications. Moreover, we assume that $\SFZ(x)$ is observed at $\{0.0,0.2,\ldots,0.8,1.0\}$, so $\ell=6$. For each $i=1,\ldots,\ell$, the observation at $x_i$ is generated via $z_i=\SFZ(x_i)+\zeta_i$, where $\zeta_i$ has normal distribution with mean 0 and variance $\sigma_\zeta^2$. We vary the value of $\omega\in[0,1]$ and specify the other parameters as follows:  $\tau_{\SFM}=\tau_{\SFW}=1$, $\theta_{\SFM}=\theta_{\SFW}=5,10,20,30$, $\sigma_\epsilon^2=1,10$, and $\sigma_\zeta^2=0.01,0.1,10$. 

Given a specification of the parameters $(\omega,\tau_{\SFM},\tau_{\SFW},\theta_{\SFM},\theta_{\SFW},\sigma_\epsilon^2,\sigma_\zeta^2)$, we construct 100 problem instances, each of which corresponds to a pair of surfaces $(\SFY(x),\SFZ(x))$ that are generated randomly based on $\SFM(x)$ and $\SFW(x)$. For each problem instance, we conduct 100 macro-replications of the following: 
\begin{enumerate}[label=(\roman*)]
\item 
Generate simulation outputs at the $k$ design points using CRN; generate the observations of $\SFZ(x)$ at $x_i$, $i=1,\ldots,\ell$.
\item 
Compute the predictor $\hat\SFZ(x_0)$ each $x_0=\frac{i}{100}$, $i=1,\ldots,100$. 
\end{enumerate}
Then, we compute $\mbox{EMSE}$ by \eqref{eq:EMSE} after the 100 macro-replications. To facilitate the comparison between CRN and independent sampling, we compute 
\begin{equation}\label{eq:EMSE_ratio}
\frac{\mathrm{EMSE}(\omega,\tau_{\SFM},\tau_{\SFW},\theta_{\SFM},\theta_{\SFW},\sigma_\epsilon^2,\sigma_\zeta^2)}{\mbox{EMSE}(0,\tau_{\SFM},\tau_{\SFW},\theta_{\SFM},\theta_{\SFW},\sigma_\epsilon^2,\sigma_\zeta^2)}
\end{equation}
for each problem instance, and then average this ratio over all the problem instances. Obviously,  the ratio is less (resp., greater) than 1, if the use of CRN is beneficial (resp., detrimental) to the prediction. The results are presented in Figure \ref{fig:CRN_1} for $\sigma_\epsilon^2=1$ and Figure \ref{fig:CRN_2} for $\sigma_\epsilon^2=10$. 

Although it is difficult to obtain analytical results similar to Theorem \ref{theo:CRN_2point} for the general case, Figure \ref{fig:CRN_1} and Figure \ref{fig:CRN_2} confirm the validity of the insights developed in \S\ref{sec:two_point_model}. The effect of CRN on the prediction accuracy of SK-i is complex and depends on the interplay between various parameters. In particular, the three kinds of behavior suggested by Theorem \ref{theo:CRN_2point} -- (i) decreasing, (ii) increasing, and (iii) first increasing then decreasing --- are exactly the same three kinds of behavior shown in our numerical experiment. 

First, if the observation errors of the real system are sufficiently small, then the EMSE is an decreasing function of $\omega\in[0,1]$. For example, most of the plots corresponding to $\sigma_\zeta^2=0.01$ in Figure \ref{fig:CRN_1} and Figure \ref{fig:CRN_2}  represent decreasing functions on $[0,1]$. 

Second, if the observation errors of the real system are sufficiently large, then the EMSE is an increasing function of $\omega\in[0,1]$. The plot corresponding to $\sigma_\zeta^2=10$ in the top-left pane of Figure \ref{fig:CRN_2} is an obvious example. 

Third, the behavior suggested by statement (iii) of Theorem \ref{theo:CRN_2point} appears to be very common one in our numerical experiment. For example, the plots corresponding to $\sigma_\zeta^2=0.1$ and $\sigma_\zeta^2=10$ in the top-right pane of Figure \ref{fig:CRN_2} both behave this way. A critical difference between them is that the former drops below 1 eventually whereas the latter remains above 1. This suggests that if $\sigma_\zeta^2$ has a moderate value, then for small values of $\omega$ the use of CRN is detrimental to the prediction, whereas for large values of $\omega$ it can be either detrimental or beneficial depending on other factors.  

At last, a finding that is not an implication of Theorem \ref{theo:CRN_2point} can be made by comparing Figure \ref{fig:CRN_1} against Figure \ref{fig:CRN_2} to gain insight about the role of $\sigma_\epsilon^2$, the variance of the simulation outputs.  In particular, in a situation where the use of CRN is beneficial (e.g., the case of $\sigma^2_\zeta=0.01$ and $\theta_{\SFM}=\theta_{\SFW}=20$),  it can be seen that the ratio \eqref{eq:EMSE_ratio} takes a larger value for $\sigma_\epsilon^2=1$ in Figure \ref{fig:CRN_1} than for $\sigma_\epsilon^2=10$ in Figure \ref{fig:CRN_2}. This suggests that the beneficial effect of CRN, if there is any, is amplified by the variance of the simulation errors. 

\begin{figure}[t]
\begin{center}
\includegraphics[width=0.9\textwidth]{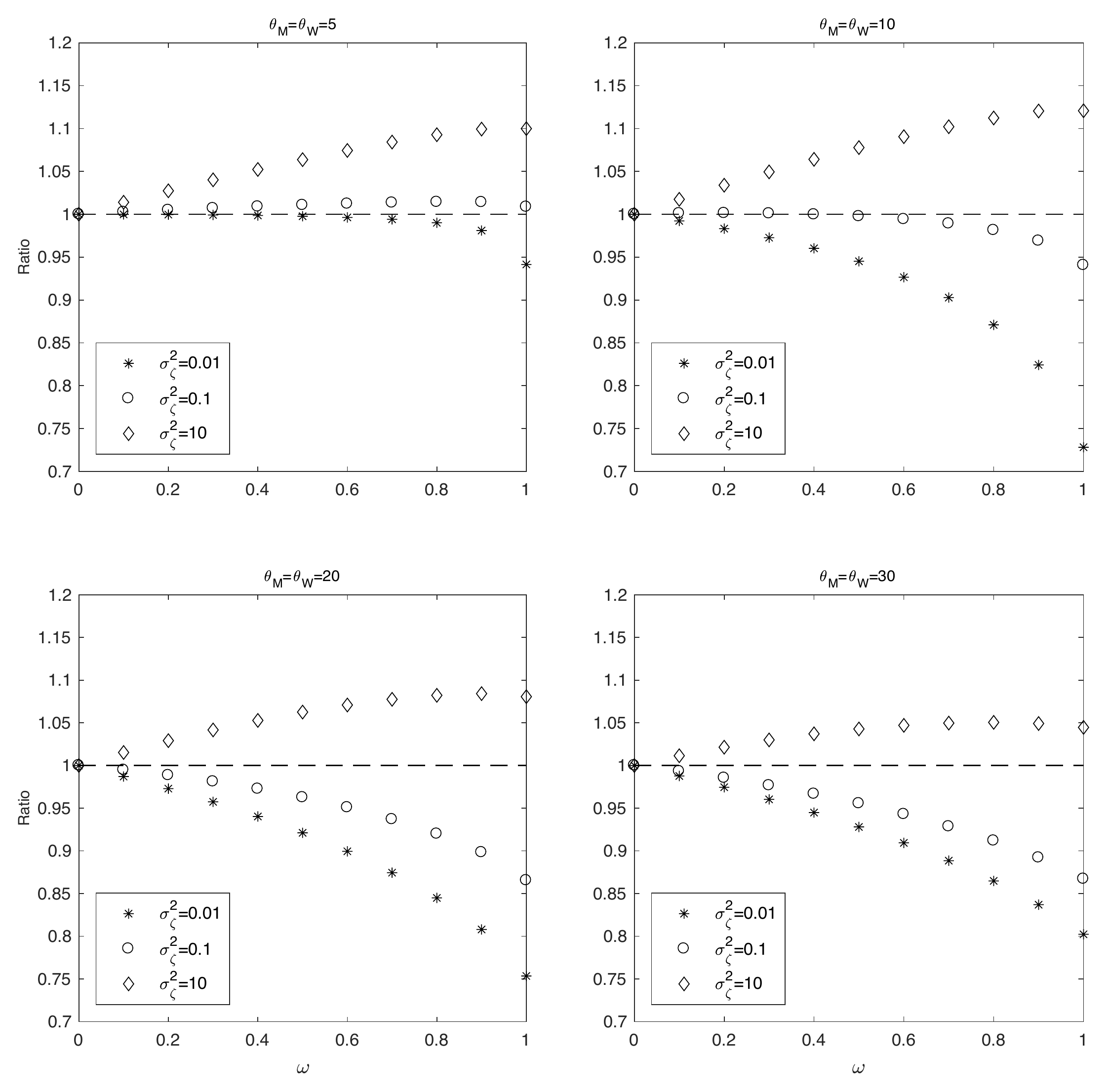}
\caption{The EMSE Ratio \eqref{eq:EMSE_ratio} for $\sigma_\epsilon^2=1$. } \label{fig:CRN_1}
\end{center}
\end{figure}

\begin{figure}[t]
\begin{center}
\includegraphics[width=0.9\textwidth]{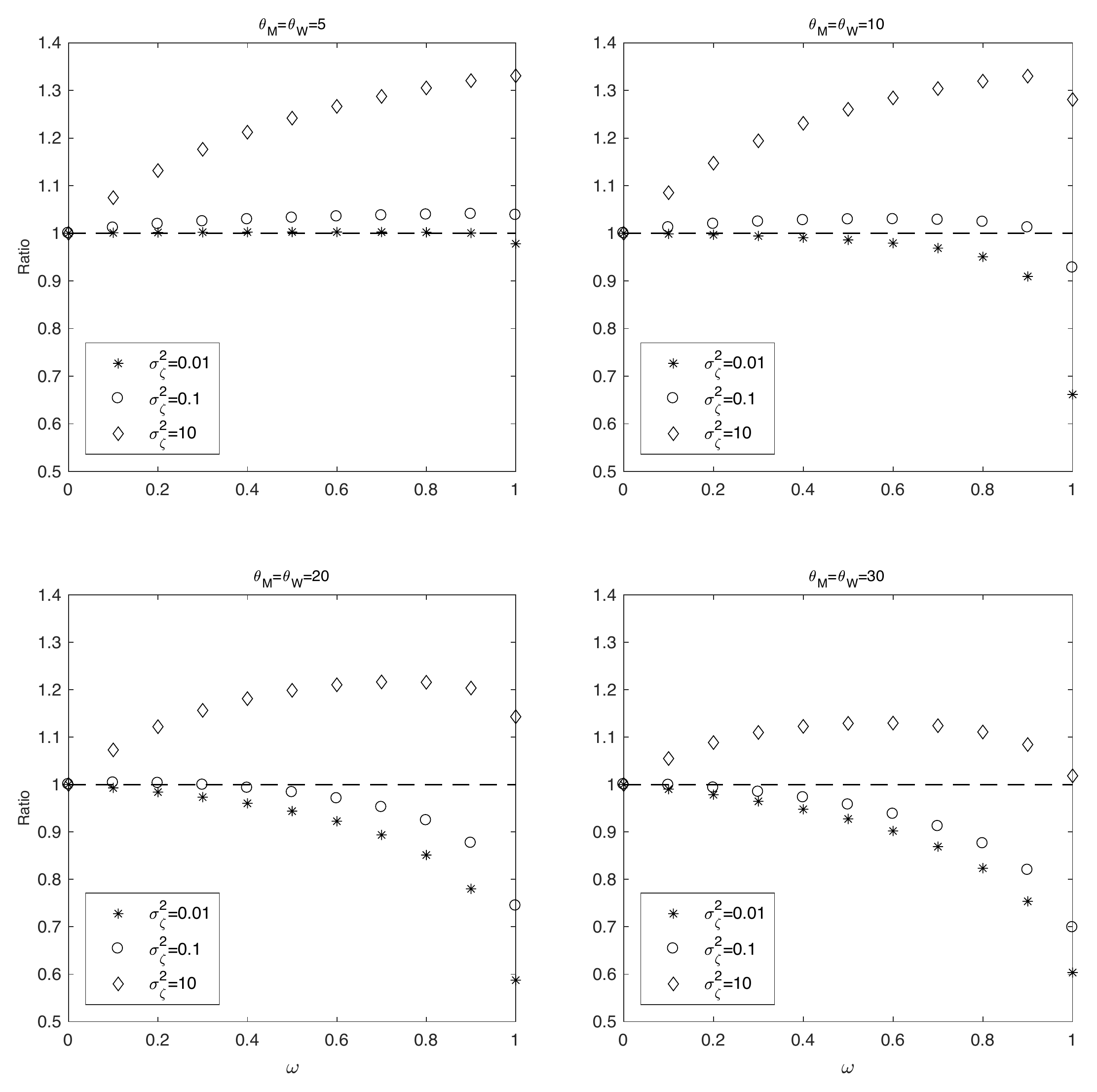}
\caption{The EMSE Ratio \eqref{eq:EMSE_ratio} for $\sigma_\epsilon^2=10$. } \label{fig:CRN_2}
\end{center}
\end{figure}

\section{Conclusions} \label{sec:conclusion}

This paper studies the popular SK metamodel in a new context where the simulation model is inadequate for the real system of interest. We propose the SK-i metamodel  that characterizes both the response surface of the simulation model and its model discrepancy simultaneously. In addition to the two types of uncertainty in the SK metamodel -- one about the response surface of the simulation model and the other about the simulation errors, the SK-i metamodel accounts for the uncertainty about the model discrepancy and the uncertainty about the observation errors of the real system as well, resulting in four types of uncertainty in total. 

Numerous problems arise naturally with regard to, e.g., usage of both  the simulation outputs and the  real data for predicting the real system's performance, estimation of the unknown parameters,  experiment design associated with the simulation model, etc. This paper addresses these problems, thereby laying a mathematical foundation for the SK-i metamodel. In particular, we show both in theory and via numerical experiments that using the augmented data, the SK-i metamodel improves the prediction of the real system substantially in general, relative to the SK metamodel that uses only the simulation outputs and the GPR method that uses only the observations of the real system. 

This paper also provides in-depth analysis of the effect of CRN. In contrast to the known result that CRN is detrimental to the capability of SK to predict the response surface of the simulation model, we show that the effect of CRN on the prediction accuracy of SK-i is complicated by the interplay of the four types of uncertainty involved. We find that the MSE of SK-i may exhibit three kinds of behavior as the CRN-induced correlation grows -- (i) decreasing, (ii) increasing, and (iii) first increasing then decreasing -- depending on a variety of parameters in a nontrivial manner. A case of particular interest is that CRN turns out to be beneficial to prediction if the observation errors of the real system are small enough. 

Uncertainty quantification is an important topic in simulation literature in recent years. Prior research has been focusing on quantifying input uncertainty and its propagation through the simulation model. Model inadequacy apparently represents another type of uncertainty. Our work in this paper suggests that the interplay between various types of uncertainty is highly nontrivial. It is thus of great interest to quantify the overall uncertainty of the real system in the presence of an inadequate simulation model which is subject to input uncertainty itself. We leave it to future investigation.

\section*{Acknowledgment}
The research is partially supported by Hong Research Grant Council under General Research Fund Project No. 16211417.

\appendix

\section{Proof of Proposition \ref{prop:GPR}.}\label{app:B}

Given $\BFz$,  consider a linear predictor $\widehat{\SFZ}_{\mathrm{GPR}}(\BFx_0) = a + \BFc^\intercal \BFz$. Following a calculation similar to that in the proof of Theorem \ref{theo:BLUP}, we have
\begin{equation}\label{eq:GPR_MSE_proof}
\begin{aligned}
& \MSE^*[\widehat\SFZ_{\mathrm{GPR}}(\BFx_0)] \\
= & [\rho \BFf^\intercal(\BFx_0) \BFbeta + \BFg^\intercal(\BFx_0) \BFgamma -a - \BFc^\intercal (\BFG\BFgamma + \rho \BFF(\ell)\BFbeta)]^2 \\
& + \rho^2\Sigma_{\SFM}(\BFx_0,\BFx_0) +  \Sigma_{\SFW}(\BFx_0,\BFx_0) + \BFc^\intercal [\rho^2\BFSigma_{\BFM(\ell)}+\BFSigma_{\BFW}+\BFSigma_\zeta]\BFc - 2\BFc^\intercal [\rho^2\BFSigma_{\BFM(\ell)}(\BFx_0, \cdot)+ \BFSigma_{\BFW}(\BFx_0, \cdot)],
\end{aligned}
\end{equation}
which is a quadratic function in $\BFc$. Setting the first-order derivative with respect to $\BFc$ to zero yields
\[
2(\rho^2\BFSigma_{\BFM(\ell)} + \BFSigma_{\BFW} +\BFSigma_\zeta)\BFc - 2[\rho^2 \BFSigma_{\BFM(\ell)}(\BFx_0,\cdot) +\Sigma_{\BFW}(\BFx_0,\cdot)] = \BFzero,
\]
which gives the solution
\[\BFc_* = [\rho^2\BFSigma_{\BFM(\ell)} + \BFSigma_{\BFW} +\BFSigma_\zeta]^{-1}[\rho^2 \BFSigma_{\BFM(\ell)}(\BFx_0,\cdot) +\Sigma_{\BFW}(\BFx_0,\cdot)] = \BFV_{\BFtwo\BFtwo}^{-1} \BFC_{\BFtwo}.
\]
Setting the first-order derivative of $\MSE^*[\widehat\SFZ_{\mathrm{GPR}}(\BFx_0)]$ with respect to $a$ to 0 yields
\[
a_* = \rho \BFf^\intercal(\BFx_0) \BFbeta + \BFg^\intercal(\BFx_0) \BFgamma  - \BFc_*^\intercal (\BFG\BFgamma + \rho \BFF(\ell)\BFbeta).
\]
Therefore, the MSE-optimal linear predictor of $ \SFZ(\BFx_0)$ give $\BFz$ is
\[a_* + \BFc_*^\intercal \BFz = \rho\,\BFf^\intercal(\BFx_0) \BFbeta + \BFg^\intercal(\BFx_0)\BFgamma  + \BFC_{\BFtwo}^\intercal \BFV_{\BFtwo\BFtwo}^{-1} [\BFz - \rho \BFF(\ell)\BFbeta - \BFG\BFgamma],\]
proving (\ref{eq:GPR_BLUP}). The unbiasedness of $\widehat{\SFZ}_{\mathrm{GPR}}(\BFx_0)$ is a straightforward result of Proposition \ref{prop:joint_normal}. The optimal MSE can be calculated easily by plugging $a_*$ and $\BFc_*$ into \eqref{eq:GPR_MSE_proof}.

\section{Maximum Likelihood Estimation}\label{app:MLE}

Let $\eta$ denote a generic component of the vector $\Xi = (\rho, \BFbeta,\BFgamma,\tau_{\SFM}^2,\tau_{\SFW}^2,\BFtheta_{\SFM},\BFtheta_{\SFW}, \sigma_\zeta^2)$. The first-order optimality conditions for the MLE is $\frac{\partial \CalL(\Xi)}{\eta} =0 $  for each $\eta$. Such conditions are derived based on explicit calculation using standard results in matrix calculus. Similar results are also given in \cite{ZhangZou16}. We present them here for ease of reference. 

We now derive the first-order derivatives. We write $\BFV = \BFV(\Xi)$ for notational simplicity. By standard results of matrix calculus,
\begin{align*}
\frac{\partial |\BFV|}{\partial \eta} =& |\BFV|\cdot\mathrm{trace}\left(\BFV^{-1} \frac{\partial \BFV}{\partial \eta}\right), \\
\frac{\partial \BFV^{-1}}{\partial \eta} =& - \BFV^{-1}\frac{\partial \BFV}{\partial \eta}  \BFV^{-1}.
\end{align*}
Notice that $\BFV$ does not depend on $\BFbeta$ or $\BFgamma$. For the other components of $\Xi$, we have
\begin{align*}
\frac{\partial \BFV}{\partial \rho} = &
\begin{pmatrix}
\BFzero& \tau_
\SFM^2\BFR_{\BFM(k),\BFM(\ell)}(\BFtheta_{\SFM}) \\[0.5ex]
\tau_{\SFM}^2\BFR^\intercal_{\BFM(k),\BFM(\ell)}(\BFtheta_{\SFM}) & 2\rho\tau_{\SFM}^2\BFR_{\BFM(\ell)}(\BFtheta_{\SFM})
\end{pmatrix}, \\[0.5ex]
\frac{\partial \BFV}{\partial \tau^2_{\SFM}} = &
\begin{pmatrix}
\BFR_{\BFM(k)}(\BFtheta_{\SFM}) & \rho\BFR_{\BFM(k),\BFM(\ell)}(\BFtheta_{\SFM}) \\[0.5ex]
\rho\BFR_{\BFM(k),\BFM(\ell)}^\intercal(\BFtheta_{\SFM}) & \rho^2\BFR_{\BFM(\ell)}(\BFtheta_{\SFM})
\end{pmatrix}, \\[0.5ex]
\frac{\partial \BFV}{\partial \tau_{\SFW}^2} = &
\begin{pmatrix}
\BFzero & \BFzero \\[0.5ex]
\BFzero &  \BFR_{\BFW}(\BFtheta_{\SFW})
\end{pmatrix},\\[0.5ex]
\frac{\partial \BFV}{\partial \sigma_\zeta^2} = &
\begin{pmatrix}
\BFzero & \BFzero \\[0.5ex]
\BFzero &  \BFI_\ell
\end{pmatrix},
\end{align*}
and letting $\BFtheta_{\SFM,p}$ and $\BFtheta_{\SFW,p}$ denote the $p^{\mathrm{th}}$ component of $\BFtheta_{\SFM}$  and $\BFtheta_{\SFW}$, respectively,
\begin{align*}
\frac{\partial \BFV}{\partial \BFtheta_{\SFM,p}} = &
\begin{pmatrix}
\tau^2_{\SFM}\frac{\partial\BFR_{\BFM(k)}(\BFtheta_{\SFM})}{\partial \BFtheta_{\SFM,p}} & \rho\tau_{\SFM}^2\frac{\partial \BFR_{\BFM(k),\BFM(\ell)}(\BFtheta_{\SFM})}{\partial \BFtheta_{\SFM,p} } \\[0.5ex]
\rho\tau_{\SFM}^2\frac{\partial\BFR_{\BFM(k),\BFM(\ell)}^\intercal(\BFtheta_{\SFM})}{\partial  \BFtheta_{\SFM,p}} & \rho^2\tau_{\SFM}^2\frac{\partial\BFR_{\BFM(\ell)}(\BFtheta_{\SFM})}{\partial \BFtheta_{\SFM,p}}
\end{pmatrix}, \\[0.5ex]
\frac{\partial \BFV}{\partial \BFtheta_{\SFW,p}} = &
\begin{pmatrix}
\BFzero & \BFzero \\[0.5ex]
\BFzero & \tau^2_{\SFW}\frac{\partial\BFR_{\BFW}(\BFtheta_{\SFW})}{\partial \BFtheta_{\SFW,p}}
\end{pmatrix}.
\end{align*}
Then, the first derivatives of $\CalL(\Xi)$  are
\begin{align*}
\frac{\partial \CalL(\Xi)}{\partial \BFbeta} =&
\begin{pmatrix}
\BFF(k)  \\[0.5ex]  \rho\BFF(\ell)
\end{pmatrix}^\intercal
\BFV^{-1}
\begin{pmatrix}
\overline{\BFy} - \BFF(k)\BFbeta \\[0.5ex] \BFz - \rho\BFF(\ell)\BFbeta - \BFG \BFgamma
\end{pmatrix},  \\[0.5ex]
\frac{\partial \CalL(\Xi)}{\partial \BFgamma} =&
\begin{pmatrix}
\BFzero  \\[0.5ex]  \BFG
\end{pmatrix}^\intercal
\BFV^{-1}
\begin{pmatrix}
\overline{\BFy} - \BFF(k)\BFbeta \\[0.5ex] \BFz - \rho\BFF(\ell)\BFbeta - \BFG\BFgamma
\end{pmatrix}, \\[0.5ex]
\frac{\partial \CalL(\Xi)}{\partial \rho} =&
-\frac{1}{2}\mathrm{trace}\left[\BFV^{-1}\frac{\partial \BFV}{\partial \rho}\right]
+
\begin{pmatrix}
\BFzero  \\[0.5ex]  \BFF(\ell)\BFbeta
\end{pmatrix}^\intercal
\BFV^{-1}
\begin{pmatrix}
\overline{\BFy} - \BFF(k)\BFbeta \\[0.5ex] \BFz - \rho\BFF(\ell)\BFbeta - \BFG\BFgamma
\end{pmatrix}\\[0.5ex]
&+\frac{1}{2}\begin{pmatrix}
\overline{\BFy} - \BFF(k)\BFbeta \\[0.5ex] \BFz - \rho\BFF(\ell)\BFbeta - \BFG\BFgamma
\end{pmatrix}^\intercal
\left[\BFV^{-1} \frac{\partial \BFV}{\partial \rho}\BFV^{-1}\right]
\begin{pmatrix}
\overline{\BFy} - \BFF(k)\BFbeta \\[0.5ex] \BFz - \rho\BFF(\ell)\BFbeta - \BFG\BFgamma
\end{pmatrix};
\end{align*}
moreover, for $\eta=\tau_{\SFM}^2,\tau_{\SFW}^2,\BFtheta_{\SFM,p},\BFtheta_{\SFW,p},\sigma^2_\zeta$,
\[
\frac{\partial \CalL(\Xi)}{\partial \eta} =
-\frac{1}{2}\mathrm{trace}\left[\BFV^{-1}\frac{\partial \BFV}{\partial \eta}\right]
+ \frac{1}{2}\begin{pmatrix}
\overline{\BFy} - \BFF(k)\BFbeta \\[0.5ex] \BFz - \rho\BFF(\ell)\BFbeta - \BFG\BFgamma
\end{pmatrix}^\intercal
\left[\BFV^{-1} \frac{\partial \BFV}{\partial \eta}\BFV^{-1}\right]
\begin{pmatrix}
\overline{\BFy} - \BFF(k)\BFbeta \\[0.5ex] \BFz - \rho\BFF(\ell)\BFbeta - \BFG\BFgamma
\end{pmatrix}.
\]


\section{Proof of Theorem \ref{theo:CRN_2point}}\label{app:CRN}
A direct calculation yields 
\[
\MSE^*(\omega)= 
\rho^2\tau_{\SFM}^2+\tau_{\SFW}^2-\frac{2\rho^2\tau_{\SFM}^4 r_0^2}{\tau_{\SFM}^2(1+ r_{12})+v(1+\omega)}-r_0^2\left(\rho^2\tau_{\SFM}^2+\tau_{\SFW}^2-\frac{\rho^2\tau_{\SFM}^4(1+r_{12})}{\tau_{\SFM}^2(1+ r_{12})+v(1+\omega)}\right)^2\frac{A(\omega)}{B(\omega)},
\]
where 
\begin{align*}
A(\omega)&=(\tau_{\SFM}^2+v)^2-(\tau_{\SFM}^2r_{12}+ v\omega)^2,\\
B(\omega)&=(\rho^2\tau_{\SFM}^2+\tau_{\SFW}^2+\sigma_\zeta^2)a(r)-\rho^2\tau_{\SFM}^4[(\tau_{\SFM}^2+v)(1+r_{12}^2)-2r_{12}(\tau_{\SFM}^2r_{12} + v\omega )].
\end{align*}
It follows that 
\begin{equation}\label{eq:d_MSE}
\frac{\ud \MSE^*(\omega)}{\ud \omega}=
\frac{-2\rho^2\tau_{\SFM}^2r_0^2H(\omega)}{[\tau_{\SFM}^2(1+r_{12})+v(1+\omega)]^2B^2(\omega)},
\end{equation}
where 
\begin{align*}
C(\omega)=&\rho^2\tau_{\SFM}^2v(1+\omega)+\tau_{\SFW}^2[\tau_{\SFM}^2(1+r_{12})+v(1+\omega)],\\
D(\omega)=&\sigma_\zeta^2A(\omega)+\rho^2\tau_{\SFM}^4v(1-r_{12})(r_{12}-\omega),\\
H(\omega)=&(\tau_{\SFM}^2+v)(\tau_{\SFM}^2r_{12}+v\omega)(1-r_{12})^2C^2(\omega)-(1-r_{12})[\tau_{\SFM}^2(1-r_{12})+r(1-\omega)]C(\omega)D(\omega)-D^2(\omega).
\end{align*}
By \eqref{eq:d_MSE}, it suffices to check the sign of $H(\omega)$ in order to determine the sign of $\frac{\ud \MSE^*(\omega)}{\ud \omega}$. To that end, we calculate $H'(\omega)$ as follows
\begin{align*}
H'(\omega) =&  4\sigma^4_{\zeta}\tau_{\SFM}^2(1+r_{12})A(\omega) \\
&+ \sigma^2_{\zeta}(1-r_{12})[\tau_{\SFM}^2(1+r_{12})+v(1+\omega)] [E(\omega) + F(\omega)] \\ 
 &+ (1-r_{12})^2[\tau_{\SFM}^2(1+r_{12})+v(1+\omega)] G(\omega), 
\end{align*}
where 
\begin{align*}
E(\omega) =& 4(\tau_{\SFM}^2r_{12}+v\omega)[\tau_{\SFM}^2(1-r_{12})+v(1-\omega)]\tau_{\SFW}^2, 
\\ 
F(\omega) =& \rho^2\tau_{\SFM}^2[\tau_{\SFM}^4(1-r_{12}^2)+4v^2\omega(1-\omega)+5\tau_{\SFM}^2v\omega(1-r_{12}) + 4\tau_{\SFM}^2v(1-\omega) + 3\tau_{\SFM}^2vr_{12}], \\
G(\omega) =& [\rho^2v\tau_{\SFM}^2+\tau_{\SFW}^2(v+\tau_{\SFM}^2)]\{\rho^2\tau_{\SFM}^2[\tau_{\SFM}^2(1+r_{12})+3v\omega+v]+\tau_{\SFW}^2[\tau_{\SFM}^2(1+3r_{12})+3v\omega+v]\}.
\end{align*}
It can be seen easily that $E(\omega), F(\omega),G(\omega)\geq 0$ since $0\leq \omega,r_{12}\leq 1$. Moreover, we have $A(\omega)>0$ for $\omega\in(0,1)$, so $H'(\omega)>0$ for $\omega(0,1)$. Hence, there are three cases regarding the sign of $H(\omega)$.  

\textbf{Case (i).} If $H(0)\geq 0$, then for any $\omega\in[0,1]$, $H(\omega)>0$ and thus $\frac{\ud \MSE^*(\omega)}{\ud \omega} <0$. 

\textbf{Case (ii).} If $H(1)\leq 0$, then for any $\omega\in[0,1]$, $H(\omega)<0$ and thus $\frac{\ud \MSE^*(\omega)}{\ud \omega} >0$.

\textbf{Case (iii).} If  $H(0)<0$ and $H(1)>0$, then there exists a unique $\omega^*\in(0,1)$ for which $H(\omega^*)=0$. Moreover, $H(\omega)>0$ for $\omega\in[0,\omega^*)$ and $H(\omega)<0$ for $\omega\in(\omega^*,1]$. 

At last, it is easy to verify that 
\[H(0) \geq 0 \iff \sigma_\zeta^2 \leq  \frac{ \tau_{\SFM}^2\tau_{\SFW}^2 r_{12}(1-r_{12})}{\tau_{\SFM}^2(1-r_{12})+v}\]
and 
\[H(1) \leq 0 \iff \sigma_\zeta^2 \geq \tau_{\SFW}^2r_{12}+v\left(\rho^2+\frac{\tau_{\SFW}^2}{\tau_{\SFM}^2}\right). \]

\bibliographystyle{chicago}
\bibliography{SK_ModErr}
\end{document}